\numberwithin{equation}{section}
\newcommand{\Compl}{\mathbb{C}}
\newcommand{\R}{\mathbb{R}}
\newcommand{\setcharf}[1]{\mathbbm{1}_{#1}}
\newcommand{\abs}[1]{\left\lvert #1 \right\rvert}
\newcommand{\norm}[1]{\left\lVert#1\right\rVert}
\newcommand{\charf}{\mathbbm{1}}
\newcommand{\acknowledgement}[1]{\medskip
	\textbf{Acknowledgement.} #1}
\DeclareMathOperator{\Ee}{\mathbf{E}}
\DeclareMathOperator{\supp}{supp}
\DeclareMathOperator{\Var}{Var}
\DeclareMathOperator{\Tr}{Tr}
\DeclareMathOperator{\dist}{dist}
\newcommand{\E}[2][]{\Ee_{#1} \left\{ #2 \right\}}
\newtheorem{thm}{Theorem}[section]
\newtheorem{prop}[thm]{Proposition}
\newtheorem{lem}[thm]{Lemma}
\newtheorem{asm}{Assumption}
\theoremstyle{remark}
\title{Asymptotic behavior of eigenvalues of large rank perturbations of large random matrices}
\author{Ievgenii Afanasiev\thanks{Kyiv School of Economics, Kyiv, Ukraine.}{ }\thanks{B. Verkin Institute for Low Temperature Physics and Engineering of the National Academy of Sciences of Ukraine, Kharkiv, Ukraine.} \and Leonid Berlyand\thanks{Department of Mathematics, Pennsylvania State University, Pennsylvania, USA.} \and
 Mariia Kiyashko\footnotemark[3]
}
\date{April 20, 2026}
\begin{document}

\maketitle

\begin{abstract}
    The paper is concerned with deformed Wigner random matrices. These matrices are closely related to Deep Neural Networks (DNNs): weight matrices of trained DNNs could be represented in the form $R + S$, where $R$ is random and $S$ is highly correlated. The spectrum of such matrices plays a key role in rigorous underpinning of the novel pruning technique based on Random Matrix Theory. In practice, the spectrum of the matrix $S$ can be rather complicated. In this paper, we develop an asymptotic analysis for the case of full rank $S$ with increasing number of outlier eigenvalues. 
\end{abstract}

\section{Introduction}
Random Matrix Theory (RMT) is a classical theory that has been developing for more than 70 years. Initially, RMT arose from problems in nuclear physics and found applications in mathematics, physics, finance, and many other disciplines. Recently, new problems have been arising in the area of Machine Learning. Indeed, often the weight matrices of Deep Neural Networks (DNNs) are initialized randomly. Moreover, modern DNNs have large weight matrices, which is why their spectral properties can be described by the asymptotic behavior of $N \times N$ random matrices as $N$ goes to infinity.  

\subsection{Formulation of the problem}

Martin and Mahoney in \cite{Ma-Ma:21} showed that weight matrices of trained DNNs can be modeled by large $N \times M$ random matrices of the form
\begin{equation}\label{W0 def}
    W = \frac{1}{\sqrt{N}}R + S,
\end{equation}
where $R$ is a rectangular matrix whose entries are i.i.d.\ random variables with zero mean and variance $\sigma^2$
and $S$ is a real non-random matrix (or random, but strongly correlated). 

The present paper is concerned with a similar but simpler $N \times N$ deformed Wigner Ensemble 
matrices of the form
\begin{equation}\label{W def}
    W = \frac{1}{\sqrt{N}}R + S,
\end{equation}
where $R$ is restricted to be a square real symmetric matrix whose entries are i.i.d.\ random variables with zero mean and variance 
\begin{equation}
    \E{R_{jk}^2} = (1 + \delta_{jk})\sigma^2,
\end{equation}
$\Ee$ denotes the expectation and $\delta_{jk}$ is $1$ if $j = k$ and $0$ otherwise. 
$S$ is a real symmetric non-random matrix.
   
The empirical spectral distribution (ESD, a.k.a.\ the Normalized Counting Measure) for an $N \times N$ symmetric matrix $X$ is defined as:
\begin{equation}
\mu_N(\Delta) = \frac{1}{N} \sum_{j = 1}^{N}\charf_{\Delta}(\lambda_j^{}(X)),
\end{equation}
where $\Delta$ is an interval in $\mathbb{R}$, 
\begin{equation}
    \lambda_1(X) \ge \lambda_2(X) \ge \dotsb \ge \lambda_N(X)
\end{equation}
are the eigenvalues of $X$ and $\charf_\Delta$ is the characteristic function of the set $\Delta$.
For a probability measure $\tau$ on $\mathbb{R}$, denote by $g_{\tau}$ its Stieltjes transform defined
for $z\in \mathbb{C} \setminus  \supp(\tau) $ by 
\begin{equation}
    g_{\tau}(z)=\int\limits_{\mathbb{R}}\frac{d\tau(t)}{t-z}.
\end{equation}

In 1972 Pastur established that the ESD $\mu$ of the matrices \eqref{W def} converges as $N \to \infty$, assuming that the ESD $\nu$ of $S$ converges to the measure $\nu_0$ \cite{Pa:72}, see also \cite[Theorem 18.3.2]{Pa-Sh:11}. It was also shown that the Stieltjes transform $g_{\mu_0}(z)$ of the limiting ESD $\mu_0$ of the matrix $W$ defined in \eqref{W def} satisfies the equation
\begin{equation}
\label{stilt_mu_nu0}
    g_{\mu_0}(z) = g_{\nu_0}(\omega_{\mu_0}(z)), \quad z \in \Compl_+,
\end{equation}
where for any measure $\tau$ we denote
\begin{equation}
\label{omega_tau}
\omega_\tau(z) = z + \sigma^2g_\tau(z).
\end{equation}
In the particular case of $\nu_0$ being a delta-measure at zero, $\mu_0$ coincides with a well-known semicircular law $\mu_{sc}$
\begin{equation}
    \frac{d\mu_{sc}}{dx}(x) = \frac{1}{2\pi\sigma^2}\sqrt{4\sigma^2 - x^2}\setcharf{[-2\sigma, 2\sigma]}(x),
\end{equation}
first obtained by Wigner in \cite{Wi:58}.

The individual asymptotic behavior of the eigenvalues $\lambda_j(W)$ depends on the individual asymptotic behavior of the eigenvalues $\lambda_j(S)$. The eigenvalues of $S$ may converge to the bulk (i.e.,\ $\supp \nu_0$) or not. Let us assume that the largest $r(N)$ eigenvalues $\lambda_j(S)$ do not converge to the bulk, but that the other do.
\begin{align}
    \label{evals not to bulk}
    &\max_{1 \le j \le r(N)} \dist (\lambda_j(S), \supp \nu_0) > \varepsilon > 0; \\
    &\lim_{N \to \infty} \max_{r(N) < j \le N} \dist (\lambda_j(S), \supp \nu_0 ) = 0.
\end{align}
The eigenvalues satisfying the property \eqref{evals not to bulk} are called \emph{spikes} or \emph{outliers}.

For $R$ being a matrix from a Gaussian Unitary Ensemble and for a low rank matrix $S$ ($\lambda_j(S) = 0$ for $j > r(N)$) with $\lim\limits_{N \to \infty} \frac{r(N)}{N} = 0$, P\'ech\'e established the behavior of the largest eigenvalue $\lambda_1(W)$ in \cite{Pe:06}. In particular, for fixed $r(N) = r'$ it was shown that if $\lim\limits_{N \to \infty} \lambda_1(S) = \theta_1$ then almost surely
\begin{equation}\label{largest eval}
    \lim\limits_{N \to \infty} \lambda_1(W) = \begin{cases}
        2\sigma, \quad\text{if } \theta_1 < \sigma,\\
        \theta_1 + \frac{\sigma^2}{\theta_1}, \quad \text{if } \theta_1 \ge \sigma.
    \end{cases}
\end{equation}
In the case $r(N) \to \infty$ P\'ech\'e obtained some result on the local behavior of outliers. Besides, Shlyakhtenko in \cite{Shlyakhtenko:18} established a $\frac{1}{N}$ correction of ESD of $W$ when $R$ is GUE or GOE matrix and $r(N) = r'$ is fixed.

The general case where $R$ is drawn from the Wigner Ensemble was considered in \cite{Capitaine:2011:FCS} and~\cite{Hu:18}. In order to describe these results, let us introduce a function
\begin{equation}
\label{Phi}
\Phi(z)=z - \sigma^2g_{\nu_0}(z).
\end{equation}
Note that if $\nu_0$ is a delta-function at zero, then $\Phi(\theta_1)$ coincides with the second line in \eqref{largest eval}. Capitaine et al.\ \cite{Capitaine:2011:FCS
} dealt with fixed $r(N) = r'$ and with $S$ of general form. They showed the following. Let $\lim\limits_{N \to \infty} \lambda_j(S) = \theta_j$, $j = 1, \dotsc, r'$ and let the common distribution of the entries of $R$ be symmetric and satisfy a Poincar\'e inequality. Then $\lambda_j(W)$ converges almost surely to $\Phi(\theta_j)$ if $\Phi'(\theta_j) > 0$, and to the bulk otherwise. In \cite{Hu:18} Huang showed a similar convergence of eigenvalues, but in the case where $r(N) \to \infty$, $r(N) \ll N$, and $S$ is a diagonal low rank matrix, namely, all the eigenvalues of $S$ except the spikes are zeros. 

Let $\mu$ and $\nu$ be the ESDs of $W$ and $S$, respectively. 
We impose the following assumptions on the matrix $S$ and its spectrum.
\begin{asm}\label{asm:nu_0}
    As $N \to \infty$, the ESD $\nu$ weakly converges to a measure $\nu_0$.
\end{asm}
\begin{asm}\label{asm:inf rank}
    There are $r = r(N)$ eigenvalues of $S$ outside $\supp \nu_0$. Let $\lim\limits_{N \to \infty} r(N) = \infty$ and $r(N) = o(N)$ as $N \to \infty$.
\end{asm}
\begin{asm}\label{asm:nu_1}
    As $N \to \infty$, the measure $\frac{N}{r}(\nu - \nu_0)$ weakly converges to a signed measure~$\nu_1$, such that $\nu_1(\R \setminus \supp \nu_0) = -\nu_1(\supp \nu_0) = 1$. Moreover, $\lim\limits_{N \to \infty} \frac{N}{r} \int t^2 d(\nu - \nu_0)(t) = \int t^2 d\nu_1(t)$.
\end{asm}

Intuitively, Assumptions \ref{asm:nu_0} and \ref{asm:inf rank} mean that the matrix $S$ has $N - r(N)$ eigenvalues in the bulk and $r(N)$ eigenvalues outside the bulk, outliers. The number $r(N)$ grows to infinity with $N$, but is of a smaller order than $N$. Assumption \ref{asm:nu_1} conditions the asymptotic behavior of outliers. The second part of Assumption \ref{asm:nu_1} is technical, so we describe the first one. The measure $\frac{N}{r}(\nu - \nu_0)$ restricted outside the bulk is approximately a normalized counting measure of outliers, whereas inside the bulk the measure $\frac{N}{r}(\nu - \nu_0)$ is approximately $-\nu_0$. So, Assumption \ref{asm:nu_1} means that the ESD of outlier eigenvalues of $S$ has a limit as $N \to \infty$. 

The crucial distinction from the previous results is that we simultaneously consider the general form of $\nu_0$ and $r(N) \to \infty$. In \cite{Capitaine:2011:FCS} the measure $\nu_0$ is of general form, but the deterministic matrix $S$ has a finite rank $r(N) = r'$ independent of $N$, while in \cite{Hu:18} $r(N)$ goes to infinity but $\nu_0$ is a delta-measure at zero. 


\subsection{Motivation: weight matrices of DNNs}\label{sec:motiv}

DNNs is a broad class of computational algorithms inspired by the human brain. They are used for classification, approximation, and image recognition, among others. DNNs have become central to modern machine learning, and understanding their mathematical properties remains an important challenge. 

Let us consider a DNN to solve a classification problem. The setup of a classification problem is as follows. There is a set of some objects $\mathcal{S}$. Each object $s \in \mathcal{S}$ belongs to one of $K$ classes. Given the object, one wants to determine its class. We search for a solution in the form of a function $\varphi\colon \mathcal{S} \to [0; 1]^K$, interpreting the components of $\varphi(s)$ as probabilities that $s$ belongs to each class. 

From the mathematical point of view, a feedforward  fully-connected DNN for solving the classification problem is a function of the following form: 
\begin{equation}
    \varphi(\cdot,\alpha)=\tilde\sigma\circ \lambda \circ f_M(\cdot,\alpha^M)\circ \lambda\circ f_{M-1}(\cdot,\alpha^{M - 1})\circ\cdots\circ\lambda\circ f_1(\cdot,\alpha^1),
\end{equation}
where each $f_k: \mathbb R^{N_{k-1}}\to\mathbb R^{N_k}$, $f_k(x,\alpha^k)=W_k x+\beta_k$ is an affine function, $W_k$ is an $N_k \times N_{k - 1}$ weight matrix, $\beta_k \in \R^{N_k}$ is a bias vector, $\lambda: \R \to \R$ is a nonlinear activation function, which is applied component-wise (for example, absolute value). The function $\tilde\sigma \colon \R^{N_M} \to \R^{N_M}$ normalizes the components of it's input vector to probabilities, thus the output of a network is a vector with non-negative components summing up to one. The entries of $W_k$ and the biases $\beta_k$ are the parameters $\alpha^k$ of a DNN. The parameters are optimized by minimizing the loss function, using, for example, stochastic gradient descent. See \cite{LB-PE:23} for more details. 

Pruning parameters of a DNN is the process of removing certain weights (set them equal to zero) from the network in a way which preserves the overall performance. The goal is to eliminate components that contribute little to the output of the DNN function, resulting in a more sparse and more efficient architecture. Pruning parameters of deep neural networks is an important task because it reduces the complexity of the function, leading to smaller, faster, and more efficient models. This not only decreases memory and computational requirements — crucial for deployment on resource-constrained devices — but can also improve generalization by preventing overfitting. In \cite{LB-Sh:23} Berlyand et al.\ provided mathematical underpinning to the pruning algorithm in the following setting: the weight matrix to be pruned can be represented as a sum of random and low-rank (``signal'') components, as in \eqref{W0 def}. The technique is termed Marchenko--Pastur (MP) pruning because its core step involves truncating the weight matrix's singular values. Specifically, any values smaller than the right edge of the Marchenko--Pastur distribution are set to zero, effectively filtering out noise from the network. 
The asymptotical behavior of the spectrum of matrices \eqref{W0 def} plays a key role.

The fixed low-rank structure of the ``signal'' matrix allows certain analytical tools to be employed, leading to provable guaranties on network behavior, particularly the bounds on accuracy and loss after pruning. However, numerical experiments reveal that this low-rank assumption does not always hold in practice. Realistic weight matrices often have a spectrum that has a ``signal'' part with the number of non-zero eigenvalues increasing with respect to the matrix size (see Figure \ref{fig:enter-label}). See Appendix \ref{a:numerics} for more details. 
These observations suggest that the existing theoretical results, while insightful, are not fully representative of the behavior seen in real-world networks.

\begin{figure}
    \centering
    \includegraphics[width=0.5\linewidth]{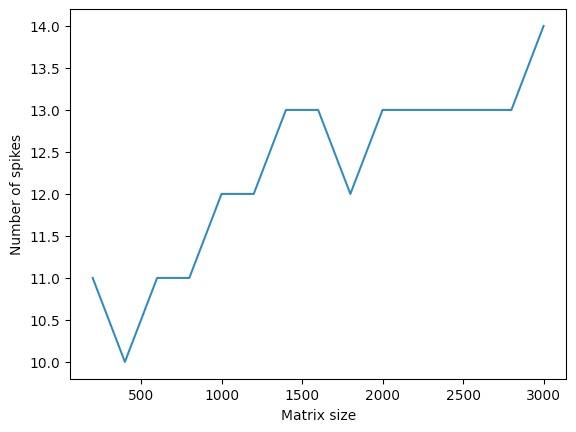}
    \caption{Numeric simulation of a DNN with 3 layers showing the dependence of the number of outlier eigenvalues of the ``signal'' matrix, so called spikes, on the size of the 
    matrix.}
    \label{fig:enter-label}
\end{figure}

In this work, we study a mathematical problem that is applicable in a simplified yet insightful setting: a DNN with two consecutive layers of equal size. This structure implies that the weight matrix connecting these layers is square. Furthermore, we impose an additional assumption of symmetry on 
the weight matrix, which 
is not applicable for DNNs directly,
but 
simplifies the theoretical analysis.

Motivated by the gap between existing rigorous proofs and numerical results, our goal is to generalize the existing theoretical framework to accommodate symmetric weight matrices without a low-rank constraint or a fixed number of spikes. By relaxing these assumptions, we aim to provide a more comprehensive understanding of the accuracy of the network and the behavior of the loss function. This generalization brings the theory closer to practical scenarios, namely the ``bulk decay'' case (see \cite[Figure 12(b) and 12(d)]{Ma-Ma:21}), where the distribution of eigenvalues looks not like the MP-distribution plus some spikes, but like the dilated MP distribution plus some spikes. Moreover, this generalization raises interesting mathematical questions about the behavior of random matrices beyond the low-rank perturbation setting. Through this work, we seek to bridge the divide between theory and numerical computations, contributing to a more robust foundation for analyzing deep learning algorithms.

\section{Main results}\label{sec:results}

In this section, we present our main results. We first consider the measure $\frac{N}{r}(\mu - \mu_0)$, which can be interpreted as an ESD of eigenvalues outside the bulk. The following theorem proves the existence of the limiting measure and describes it.

\begin{thm}\label{th:lim ESD} 
    Let $W$ be a random matrix of the form \eqref{W def}. Let the third and fourth moments of the entries of $R$ be finite. Let $\mu$ and $\nu$ be the ESDs of the matrices $W$ and $S$, respectively, $\mu_0$ be the limit of $\mu$ as $N \to \infty$, and let Assumptions 1--3 hold. Then a signed measure $\frac{N}{r}(\mu - \mu_0)$ converges weakly in distribution, as $N \to \infty$, to a non-random measure $\mu_1$. Moreover, if $\mu_1$ has a continuous density, then for any measurable set $\Delta$ that does not intersect with $\supp \mu_0$ 
	\begin{equation}\label{mu1}
        \mu_1(\Delta) = \nu_1(\omega_{\mu_0}(\Delta)), 
	\end{equation}
    where $\omega_{\mu_0}(\Delta)$ is the image of a set $\Delta$ under the function $\omega_{\mu_0}$.
\end{thm}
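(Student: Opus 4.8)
The plan is to work with Stieltjes transforms throughout. Write $g_\mu$, $g_{\mu_0}$, $g_\nu$, $g_{\nu_0}$ for the Stieltjes transforms of the four measures, and $\omega_\mu(z) = z + \sigma^2 g_\mu(z)$, $\omega_{\mu_0}(z) = z + \sigma^2 g_{\mu_0}(z)$ for the associated subordination functions. The finite-$N$ analogue of \eqref{stilt_mu_nu0} is the well-known exact relation (valid for GOE deformations, up to an $O(1/N)$ correction): $g_\mu(z) \approx g_\nu(\omega_\mu(z))$, where $\omega_\mu$ is the unique solution of a self-consistent equation. Since we want the measure $\mu_1 = \lim_N \frac{N}{r}(\mu - \mu_0)$, it suffices to prove convergence of $\frac{N}{r}\bigl(g_\mu(z) - g_{\mu_0}(z)\bigr)$ to a holomorphic function $g_{\mu_1}(z)$ on $\Compl \setminus \supp \mu_0$, identify this limit, and then invoke the Stieltjes inversion formula together with a standard tightness/continuity argument to upgrade to weak convergence of the measures. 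The concentration of $g_\mu(z)$ around its expectation (so that "convergence in distribution" reduces to convergence of deterministic objects) is classical for Gaussian-entry matrices via the Poincaré inequality and can be quoted.

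**Main computation: linearizing the self-consistent equation.**
First I would establish the finite-$N$ subordination identity and its limit: from $g_\mu(z) = g_\nu(\omega_\mu(z)) + o(1)$ and $\omega_\mu(z) = z + \sigma^2 g_\mu(z)$ one recovers \eqref{stilt_mu_nu0} in the limit, with $\omega_{\mu_0}(z) = z + \sigma^2 g_{\mu_0}(z)$. Now subtract the limiting equation from the finite-$N$ one and multiply by $N/r$. Writing $\delta_g = \frac{N}{r}(g_\mu - g_{\mu_0})$ and $\delta_\omega = \frac{N}{r}(\omega_\mu - \omega_{\mu_0}) = \sigma^2 \delta_g$, a Taylor expansion of $g_\nu$ around $\omega_{\mu_0}$ gives
\begin{equation}
\delta_g(z) = \frac{N}{r}\bigl(g_\nu(\omega_{\mu_0}(z)) - g_{\nu_0}(\omega_{\mu_0}(z))\bigr) + g_{\nu_0}'(\omega_{\mu_0}(z))\,\delta_\omega(z) + o(1).
\end{equation}
By Assumption 3, $\frac{N}{r}(\nu - \nu_0) \to \nu_1$ weakly with control of the second moment, hence $\frac{N}{r}\bigl(g_\nu(w) - g_{\nu_0}(w)\bigr) \to g_{\nu_1}(w)$ for $w$ away from $\supp\nu_0$; here I must check that $\omega_{\mu_0}(z)$ indeed stays away from $\supp\nu_0$ when $z \notin \supp\mu_0$, which follows from the structure of the subordination map. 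Substituting $\delta_\omega = \sigma^2\delta_g$ and solving the resulting linear equation yields
\begin{equation}
\delta_g(z) \;\longrightarrow\; g_{\mu_1}(z) = \frac{g_{\nu_1}(\omega_{\mu_0}(z))}{1 - \sigma^2 g_{\nu_0}'(\omega_{\mu_0}(z))}.
\end{equation}

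**From the transform to the measure identity \eqref{mu1}.**
It remains to recognize the right-hand side as the Stieltjes transform of the push-forward of $\nu_1$ under the inverse of $\omega_{\mu_0}$ (equivalently, under $\Phi$ from \eqref{Phi}, since on the complement of the bulk $\omega_{\mu_0}$ and $\Phi$ are mutually inverse). Indeed, a change of variables $w = \omega_{\mu_0}(z)$ in $g_{\nu_1}(w) = \int \frac{d\nu_1(t)}{t - w}$, using $\frac{dw}{dz} = 1 + \sigma^2 g_{\mu_0}'(z)$ and the identity $1 - \sigma^2 g_{\nu_0}'(\omega_{\mu_0}(z)) = \frac{dw/dz}{\,\cdot\,}$ obtained by differentiating \eqref{stilt_mu_nu0}, shows that $g_{\mu_1}$ is exactly $g_{\tau}$ for $\tau = (\omega_{\mu_0})_*^{-1}\nu_1$ restricted to the region outside $\supp\mu_0$. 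Applying Stieltjes inversion on an interval $\Delta$ disjoint from $\supp\mu_0$ then gives $\mu_1(\Delta) = \nu_1(\omega_{\mu_0}(\Delta))$, which is \eqref{mu1}. The "if $\mu_1$ has a density" hypothesis is used precisely to make the inversion formula produce a genuine set function rather than only determining $\mu_1$ up to its singular part on $\partial\,\supp\mu_0$.

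**Main obstacle.**
The delicate point is not the formal manipulation but the uniformity: I need $\frac{N}{r}(g_\nu - g_{\nu_0}) \to g_{\nu_1}$ and the remainder in the Taylor expansion to be $o(1)$ \emph{uniformly} on compact subsets of $\Compl \setminus \supp\mu_0$, and simultaneously the error in the finite-$N$ subordination identity must be genuinely $o(r/N)$ after multiplication by $N/r$ — i.e.\ $o(1/\text{(something that does not blow up)})$; a crude $O(1/N)$ bound on that error is not obviously enough unless one checks the implied constant is uniform and does not interact badly with the $1-\sigma^2 g_{\nu_0}'$ denominator near the edge of $\supp\mu_0$. Handling the behavior as $z$ approaches $\supp\mu_0$ (where $\omega_{\mu_0}(z)$ may approach the edge of $\supp\nu_0$ and the denominator may degenerate) is where the real work lies, and is presumably why the statement is phrased for $\Delta$ bounded away from $\supp\mu_0$ with the extra density assumption. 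I expect the author's proof to quote a precise finite-rank-type resolvent expansion (à la P\'ech\'e / Pastur) to control that error term.
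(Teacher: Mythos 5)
Your proposal follows essentially the same route as the paper: it proves the pre-limiting subordination identity with an $O(1/N)$ error (the paper's Proposition~\ref{pr:eq Stieltjes}), linearizes around the limiting equation at order $r/N$ to get $g_{\mu_1}(z) = g_{\nu_1}(\omega_{\mu_0}(z))/(1 - \sigma^2 g_{\nu_0}'(\omega_{\mu_0}(z))) = g_{\nu_1}(\omega_{\mu_0}(z))\,\omega_{\mu_0}'(z)$, and then passes from the transform to the measure via the $\Phi$--$\omega_{\mu_0}$ inverse relation (Lemma~\ref{Phi(omega)=id}) and Stieltjes inversion. The ``main obstacle'' you flag is not actually an obstacle: after multiplying by $N/r$, the error is $O(1/r)$, which is $o(1)$ precisely because Assumption~\ref{asm:inf rank} forces $r(N)\to\infty$, and the implied constant depends only on $|\Im z|$, which is uniform on compacts of $\Compl^+$ where the linearization is performed (the passage $y\to 0^+$ is then handled separately and carefully in the proof of the theorem, exactly as you anticipate).
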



The next theorem describes the asymptotic behavior of individual outliers of $W$. We denote the positive part of $\nu_1$ by $\nu_1^+$, which is the limiting ESD of outliers of $S$. We assume that $\nu_1^{+}$ is supported on an interval that lies in the range of $\omega_{\mu_0}$ and all the outliers of $S$ approach the support of $\nu_1^+$ as $N \to \infty$. 

\begin{thm}\label{th:jth eval}
    Let $W$ be a random matrix of the form \eqref{W def}. Let the third and fourth moments of the entries of $R$ be finite. Let $\mu$ and $\nu$ be the ESDs of matrices $W$ and $S$, respectively, and let Assumptions 1--3 hold. Additionally, let $\supp \nu_1^+ = 
    [\omega_{\mu_0}(a), \omega_{\mu_0}(b)]
    $, $\dist (\omega_{\mu_0}(a), \supp \nu_0) > 0$, 
    and
    \begin{align}
        &\lim_{N \to \infty} \max_{1 \le j \le N} \dist (\lambda_j(S), \supp \nu_0 \cup [\omega_{\mu_0}(a), \omega_{\mu_0}(b)]) = 0; \\
        \label{strong attraction (r)}
        &\lim_{N \to \infty} \max_{1 \le j \le r} \dist (\lambda_j(S), [\omega_{\mu_0}(a), \omega_{\mu_0}(b)]) = 0.
    \end{align}
    Then for any $j(N) \le r(N)$ 
    \begin{equation}\label{diff lambda_j}
        \lambda_{j(N)}(W) - \Phi(\lambda_{j(N)}(S)) \to 0, \text{ as } N \to \infty
    \end{equation}
    in probability, where $\Phi(x)$ is defined in \eqref{Phi}.
\end{thm}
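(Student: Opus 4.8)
\medskip\noindent\emph{Strategy.}
The plan is to prove the stronger uniform statement $\max_{1\le j\le r}\abs{\lambda_j(W)-\Phi(\lambda_j(S))}\to 0$ in probability, which immediately yields \eqref{diff lambda_j} for every admissible sequence $j(N)$. Write $\theta_j:=\lambda_j(S)$ and split $S=S_{\mathrm b}+UDU^{*}$, where the columns of the $N\times r$ matrix $U=[u_1,\dots,u_r]$ are orthonormal eigenvectors of $S$ attached to its $r$ eigenvalues outside $\supp\nu_0$, $D=\diag(\theta_1,\dots,\theta_r)$, and $S_{\mathrm b}:=S-UDU^{*}$; thus $S_{\mathrm b}u_i=0$ for $i\le r$, the ESD of $S_{\mathrm b}$ still converges to $\nu_0$, and $\norm{S_{\mathrm b}}$ stays bounded. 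Put $W_0:=\tfrac1{\sqrt N}R+S_{\mathrm b}$, so $W=W_0+UDU^{*}$. Since $W_0$ is a deformed GOE whose deterministic part has no outliers, one has $\sup\operatorname{spec}(W_0)\le e_+ + o(1)$ with probability tending to one, $e_+$ being the right edge of $\mu_0$; by the separation hypotheses (in particular $\dist(a,\supp\nu_0)>0$, which we use to know that $\Phi$ is increasing near $[a,b]$ and $\Phi(a)>e_+$) together with \eqref{strong attraction (r)}, all the $\theta_j$ lie in a fixed compact set $\mathcal K\subset(\sup\operatorname{spec}(W_0),\infty)$ on which $\Phi$ is smooth with $\abs{\Phi'}$ bounded above and below. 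For simplicity assume the spikes are positive, $[a,b]\subset(0,\infty)$ (the general case is analogous, replacing $D^{1/2}\cdots D^{1/2}$ by the symmetric pencil $D^{-1}+U^{*}(W_0-\lambda)^{-1}U$).

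\medskip\noindent\emph{Reduction to an inertia count.}
For real $c>\sup\operatorname{spec}(W_0)$ set $\mathcal M(c):=I_r+D^{1/2}U^{*}(W_0-c)^{-1}UD^{1/2}$. By the matrix determinant lemma, $\det(W-\lambda)=\det(W_0-\lambda)\det\mathcal M(\lambda)$ for $\lambda\notin\operatorname{spec}(W_0)$, so the eigenvalues of $W$ exceeding $c$ are exactly the zeros of $\det\mathcal M$ in $(c,\infty)$. The key soft observation is that $\mathcal M(\lambda)$ is symmetric with $\partial_\lambda\mathcal M(\lambda)=D^{1/2}U^{*}(W_0-\lambda)^{-2}UD^{1/2}\succ 0$ (as $U$ has rank $r$), hence each of its ordered eigenvalues is strictly increasing on $(\sup\operatorname{spec}(W_0),\infty)$ and tends to $1$ at $+\infty$; thus an eigenvalue of $\mathcal M$ vanishes in $(c,\infty)$ iff it is negative at $c$, which gives
\begin{equation}\label{eq:count-inertia}
\#\{\,i:\lambda_i(W)>c\,\}=n_-\!\big(\mathcal M(c)\big),
\end{equation}
where $n_-(\cdot)$ denotes the number of negative eigenvalues.

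\medskip\noindent\emph{Deterministic equivalent and conclusion.}
By the anisotropic local law for the deformed GOE $W_0$, and because $S_{\mathrm b}u_i=0$, for $c$ ranging over a compact neighbourhood of $\Phi(\mathcal K)$ — at a fixed distance from $\operatorname{spec}(W_0)$ — one has
\begin{equation}\label{eq:loclaw-slab}
U^{*}(W_0-c)^{-1}U=-\,\omega_{\mu_0}(c)^{-1}I_r+E(c),\qquad \sup_{c}\norm{E(c)}=o(1)\ \text{with probability}\ 1-o(1),
\end{equation}
the operator-norm error being of order $\sqrt{r/N}$, hence negligible since $r=o(N)$. Combining \eqref{stilt_mu_nu0}, \eqref{omega_tau} and \eqref{Phi}, the subordination function $\omega_{\mu_0}(c)=c+\sigma^2g_{\mu_0}(c)$ is the functional inverse of $\Phi$ on $(e_+,\infty)$, i.e.\ $\omega_{\mu_0}(\Phi(\theta))=\theta$, so with $\widetilde E(c):=D^{1/2}E(c)D^{1/2}$ (still $o(1)$ in operator norm, uniformly) \eqref{eq:loclaw-slab} becomes $\mathcal M(c)=\diag\!\big(1-\theta_i/\Phi^{-1}(c)\big)_{i=1}^r+\widetilde E(c)$. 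Now fix small $\epsilon>0$ and $j\le r$, and use \eqref{eq:count-inertia} at $c=\Phi(\theta_j)+\epsilon$: then $\Phi^{-1}(c)=\theta_j+\delta$ with $\delta\ge\epsilon/\sup_{\mathcal K}\abs{\Phi'}=:\delta_0>0$, so a diagonal entry $1-\theta_i/(\theta_j+\delta)$ can be negative only when $\theta_i>\theta_j+\delta>\theta_j$, i.e.\ only for the at most $j-1$ indices with $\theta_i>\theta_j$; since $\norm{\widetilde E(c)}=o(1)$ is eventually below the fixed quantity $\delta_0/(b+\delta_0)$, Weyl's inequality gives $n_-(\mathcal M(c))\le\#\{i:\theta_i>\theta_j\}\le j-1$, whence $\lambda_j(W)\le\Phi(\theta_j)+\epsilon$. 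The same computation at $c=\Phi(\theta_j)-\epsilon$ gives $\Phi^{-1}(c)=\theta_j-\delta$ and $n_-(\mathcal M(c))\ge\#\{i:\theta_i\ge\theta_j\}\ge j$, whence $\lambda_j(W)>\Phi(\theta_j)-\epsilon$. Both bounds hold for all $j\le r$ simultaneously on the event where $\sup_c\norm{E(c)}$ is below that threshold and $\sup\operatorname{spec}(W_0)<\Phi(a)-\epsilon$, an event of probability $\to1$; letting $\epsilon\downarrow0$ proves the claim.

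\medskip\noindent\emph{Main obstacle.}
The analytic input ($\omega_{\mu_0}=\Phi^{-1}$; the value $-\omega_{\mu_0}(c)^{-1}$ of the deterministic equivalent in the spike directions) is elementary, and the monotonicity behind \eqref{eq:count-inertia} is soft. The genuinely technical point is \eqref{eq:loclaw-slab}: one needs the anisotropic local law for $W_0$ in \emph{operator} norm, over a \emph{growing} number $r=o(N)$ of deterministic directions, \emph{uniformly} for real $c$ at a fixed distance from $\operatorname{spec}(W_0)$, together with the (standard) absence of outliers for $W_0$. For the Gaussian ensemble this should follow from the bilinear anisotropic law via a union bound over an $\epsilon$-net of the unit sphere in $\R^r$, the (sub)Gaussian concentration of resolvent matrix elements absorbing the $e^{O(r)}$ cardinality of the net — and it is precisely here that the regime $r\to\infty$, as opposed to the finite-rank setting of \cite{Pe:06}, requires real estimates.
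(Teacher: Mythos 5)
Your proposal takes a genuinely different route from the paper. The paper deduces Theorem~\ref{th:jth eval} as a corollary of Theorem~\ref{th:lim ESD}: it partitions $[\Phi(a),\Phi(b)]$ into $k$ subintervals, tests the convergences $\tfrac{N}{r}(\nu-\nu_0)\to\nu_1$ and $\tfrac{N}{r}(\mu-\mu_0)\to\mu_1$ against piecewise-linear cutoffs $h_i$, and chains four counting inequalities to trap $\Phi(\lambda_{j(N)}(S))$ in an interval of length $O(1/k)$ around $\lambda_{j(N)}(W)$. That argument is ``soft'': its probabilistic input is the scalar Stieltjes-transform equation (Proposition~\ref{pr:eq Stieltjes}) together with a variance bound on $g_\mu$, and it never needs to control the resolvent along the $r$ spike directions. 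You instead run the BBP/P\'ech\'e strategy from \cite{Pe:06}: split off $UDU^*$, use the matrix determinant lemma and the inertia count $\#\lbrace i:\lambda_i(W)>c\rbrace=n_-(\mathcal M(c))$, and insert the anisotropic deterministic equivalent $U^*(W_0-c)^{-1}U\approx-\omega_{\mu_0}(c)^{-1}I_r$; the algebra (the identity $\omega_{\mu_0}\circ\Phi=\mathrm{id}$, the monotone eigenvalue branches of $\mathcal M$, Weyl's inequality) is sound.

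The gap is the one you flag yourself, and it is not a small one: you need $\norm{E(c)}=o(1)$ in \emph{operator} norm, uniformly over a real interval bounded away from the spectrum of $W_0$, for an $r\times r$ block of the resolvent with $r\to\infty$ and only $r=o(N)$. The entrywise anisotropic local law gives $\abs{E_{ij}(c)}\lesssim N^{-1/2+\epsilon}$, which only yields $\norm{E}\lesssim r N^{-1/2+\epsilon}$ and is useless once $r\gg N^{1/2}$. The $\epsilon$-net argument you sketch has to beat a net of cardinality $e^{O(r)}$, and the usual high-probability form of the local law (failure probability $\le N^{-D}$ for fixed $D$) cannot do that once $r\gg\log N$; one needs genuine subgaussian concentration of resolvent quadratic forms at scale $N^{-1/2}$, uniformly in $c$, which is neither standard at this level of generality nor supplied here. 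Since your whole proof funnels through that single estimate, the argument is incomplete as written. A secondary, fixable point: setting $S_b:=S-UDU^*$ sends the spike eigenvalues of $S$ to $0$, which need not lie in $\supp\nu_0$, so $W_0$ could itself acquire up to $r$ outliers and the clean identity $U^*(S_b-\omega)^{-1}U=-\omega^{-1}I_r$ would need adjusting; shifting the spike eigenvalues to a point of $\supp\nu_0$ rather than to $0$ repairs this, but does not touch the main difficulty.
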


\emph{Remarks}
\begin{enumerate}
    \item For small $\sigma^2$ it is intuitively clear that the random term in \eqref{W def} is small with high probability and the outliers of $W$ stay close to the outliers of $S$. Theorem \ref{th:jth eval} provides the sufficient conditions for the outliers of $S$ to become the outliers of $W$ for arbitrary $\sigma^2$. The theorem also describes the asymptotic location of the outliers of $W$.
    \item If the sequence $j(N)$ is chosen such that $\lim_{N \to \infty} \lambda_{j(N)}(S) = \theta$, then $\lambda_{j(N)}(W) \to \Phi(\theta)$, as $N \to \infty$.
    \item Theorem \ref{th:jth eval} agrees well with the results obtained in \cite{Capitaine:2011:FCS, Hu:18, Pe:06}.
\end{enumerate}

The paper is organized as follows. Section \ref{sec:eq Stieltjes} is concerned with a pre-limiting equation for the Stieltjes transform of the ESD (like \eqref{stilt_mu_nu0}, but before taking the limit $N \to \infty$). We improve the error term in the equation to use it further in the proof. In Sections \ref{sec:spikes ESD} and \ref{sec:th2} we present proofs of Theorems \ref{th:lim ESD} and \ref{th:jth eval}, respectively.

\section{Equation for the Stieltjes transform}\label{sec:eq Stieltjes}

In this section, we prove that an error term in the pre-limiting equation for the Stieltjes transform of the ESD has the order $O(N^{-1})$. We do it in two steps: first, for the case of deformed Gaussian Orthogonal Ensemble (GOE)
\begin{equation}\label{tilde W def}
    \tilde W = N^{-1/2}H + S,
\end{equation}
where $N^{-1/2}H$ is drawn from GOE, and then in the general case. Proposition \ref{pr:eq Stieltjes} covers the first step.

\begin{prop}\label{pr:eq Stieltjes}
    Let $\tilde W$ be defined in \eqref{tilde W def}. Let 
    the ESD $\nu$ of $S$ converge weakly 
    to $\nu_0$ as $N \to \infty$. Then for any fixed $z\in \Compl_+$, the Stieltjes transform $g_{\mu}(z)$ satisfies the following equation
    \begin{equation}\label{eq: St transf}
        \Ee g_{\mu}(z) - \Ee g_{\nu}(z + \sigma^2\Ee g_{\mu}(z)) = O(N^{-1}), \quad N \to \infty.
    \end{equation}
\end{prop}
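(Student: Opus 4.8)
The plan is to run the classical resolvent / Gaussian integration-by-parts argument for additive deformations of GOE; the only genuinely non-mechanical point is the choice of the deterministic matrix against which the resolvent of $W$ is compared. Write $H=\tfrac1{\sqrt N}R$ and $G(z)=(W-z)^{-1}$, so $g_\mu(z)=\tfrac1N\tr G(z)$ and $\Ee H_{ij}^2=\sigma^2/N$. Since $S$ is non-random, $\Ee g_\nu=g_\nu$, so \eqref{eq: St transf} is equivalent to $\Ee g_\mu(z)-\tfrac1N\tr(S-\omega)^{-1}=O(N^{-1})$ with $\omega:=z+\sigma^2\Ee g_\mu(z)$. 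I would argue for $z$ with $\Im z>0$ (the case $\Im z<0$ being symmetric). Then $\Im\omega=\Im z+\sigma^2\Ee[\Im g_\mu(z)]\ge\Im z>0$, so, setting $\tilde G:=(S-\omega)^{-1}$ — a deterministic matrix with $\tfrac1N\tr\tilde G=g_\nu(\omega)$ — both $\|G\|$ and $\|\tilde G\|$ are at most $|\Im z|^{-1}$.

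The first step is the resolvent identity $G-\tilde G=G\big[(S-\omega)-(W-z)\big]\tilde G$; since $(S-\omega)-(W-z)=-\sigma^2\Ee g_\mu(z)\,I-H$, this reads $G-\tilde G=-\sigma^2\Ee g_\mu(z)\,G\tilde G-GH\tilde G$. Applying $\tfrac1N\tr$ and $\Ee$, the only term requiring work is $\Ee\tr(GH\tilde G)$. I would evaluate it by Gaussian integration by parts in the entries of $H$, using $\frac{\partial G_{cd}}{\partial H_{ab}}=-G_{ca}G_{bd}$ and $\Ee[H_{ab}H_{cd}]=\tfrac{\sigma^2}{N}(\delta_{ac}\delta_{bd}+\delta_{ad}\delta_{bc})$ up to an $O(N^{-1})$ diagonal correction (from the symmetry of $R$ and the diagonal-variance convention); contracting the resulting indices against $\tilde G$ gives
\begin{equation*}
    \Ee\tr(GH\tilde G)=-\tfrac{\sigma^2}{N}\Ee\big[\tr(\tilde GG^2)\big]-\tfrac{\sigma^2}{N}\Ee\big[(\tr G)\tr(\tilde GG)\big]+O(1).
\end{equation*}
Substituting and splitting $\tfrac1{N^2}\Ee[(\tr G)\tr(\tilde GG)]=\Ee g_\mu(z)\cdot\Ee\tfrac1N\tr(\tilde GG)+\mathrm{Cov}\big(g_\mu(z),\tfrac1N\tr(\tilde GG)\big)$, the term $-\sigma^2\Ee g_\mu(z)\cdot\Ee\tfrac1N\tr(G\tilde G)$ cancels exactly with $+\sigma^2\Ee g_\mu(z)\cdot\Ee\tfrac1N\tr(\tilde GG)$ (the traces agreeing by cyclicity), leaving
\begin{equation*}
    \Ee g_\mu(z)-\tfrac1N\tr\tilde G=\tfrac{\sigma^2}{N^2}\Ee\big[\tr(\tilde GG^2)\big]+\sigma^2\,\mathrm{Cov}\big(g_\mu(z),\tfrac1N\tr(\tilde GG)\big)+O(N^{-1}).
\end{equation*}

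It then remains to bound the first two right-hand terms. The first is deterministic: $|\tr(\tilde GG^2)|\le N\|\tilde G\|\,\|G\|^2\le N|\Im z|^{-3}$, so it is $\le\sigma^2|\Im z|^{-3}N^{-1}$. For the covariance I would use $|\mathrm{Cov}|\le\sqrt{\Var(g_\mu(z))\,\Var(\tfrac1N\tr(\tilde GG))}$ and the Gaussian Poincar\'e inequality for the i.i.d.\ Gaussian entries of $R$: the gradient in $H$ of each of the functionals $\tfrac1N\tr G$ and $\tfrac1N\tr(\tilde GG)$ has squared Frobenius norm $O(N^{-1})$ (with an $\Im z$-dependent constant) and the entries have variance $\sigma^2/N$, so each variance is $O(N^{-2})$ and hence the covariance is $O(N^{-2})$. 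Together with the already-extracted $O(N^{-1})$ this gives $\Ee g_\mu(z)-\tfrac1N\tr(S-\omega)^{-1}=O(N^{-1})$, i.e.\ \eqref{eq: St transf}.

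I expect the main obstacle to be conceptual rather than computational: one has to recognize that the correct comparison is with the resolvent of $S$ at the \emph{self-consistently shifted} argument $\omega=z+\sigma^2\Ee g_\mu(z)$, which is exactly what forces the two leading-order terms to cancel identically rather than merely approximately — a less felicitous choice leaves a fixed-point equation that would itself have to be analyzed. The remaining work is routine: the integration by parts respecting the symmetry of $R$ and checking the diagonal-variance mismatch only perturbs the error term; the operator-norm bounds on resolvents, which hold away from the real axis; and the Poincar\'e concentration estimates. (If one wanted \eqref{eq: St transf} with the $O(N^{-1})$ uniform up to the real axis, for $z$ at fixed distance from $\supp\mu$, the $|\Im z|^{-1}$-type blow-ups would need to be tracked more carefully, but that is not required here.)
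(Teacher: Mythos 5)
Your proof is correct, but it follows a genuinely different (and equally standard) route than the paper. The paper argues via the Schur complement: it writes $G_{ii}(z)=1/A_i$ with $A_i=W_{ii}-z-\bm w_i^*G^{(i)}\bm w_i$, identifies the leading deterministic part $A_i^{(0)}=S_{ii}-z-\sigma^2\Ee g_\mu(z)$, and bounds each of four fluctuation terms $r_1,\dots,r_4$ using moment estimates and a minor-comparison variance bound from Pastur--Shcherbina. You instead start from the resolvent identity $G-\tilde G=-\sigma^2\Ee g_\mu(z)\,G\tilde G-GH\tilde G$ against the self-consistently shifted deterministic resolvent $\tilde G=(S-\omega)^{-1}$, compute $\Ee\tr(GH\tilde G)$ by Gaussian integration by parts, and close the argument with the Gaussian Poincar\'e inequality for the residual covariance. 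Your cancellation of $-\sigma^2\Ee g_\mu\cdot\Ee\tfrac1N\tr(G\tilde G)$ against the factorized piece of $\tfrac{\sigma^2}{N^2}\Ee[(\tr G)\tr(\tilde GG)]$ is exactly the cumulant-expansion analogue of the paper's cancellation of the $\sigma^2\Ee g_\mu$ term inside $A_i^{(0)}$; your (correctly flagged) $O(N^{-1})$ from the diagonal-variance mismatch and the symmetrization of the derivative is the analogue of the paper's harmless $N^{-1/2}R_{ii}$ and $r_4$ terms. The trade-off is the expected one: your approach is more systematic and packages the fluctuation estimates into a single Poincar\'e bound (giving $\Var g_\mu(z)=O(N^{-2})$, sharper than what the paper invokes), but it leans essentially on Gaussianity for both the integration by parts and the concentration; the paper's Schur-complement decomposition, while more term-by-term, relies only on second/fourth moments and a martingale-type variance bound and would extend more directly to non-Gaussian entries. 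Since the paper's hypotheses are Gaussian anyway, both routes are legitimate here.
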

\begin{proof}
Without loss of generality, we can assume that the matrix $S$ is diagonal. Indeed, 
let $S = U^*\Sigma U$, where $\Sigma$ is a diagonal matrix. Then the matrix
\begin{equation*}
    U\tilde WU^* = \frac{1}{\sqrt{N}}UHU^* + \Sigma
\end{equation*}
has the same spectrum as $\tilde W$, and $UHU^*$ has the same distribution as $H$. Thus, it is enough to prove the claim for the matrices $W$ of the form
\begin{equation}
    W = N^{-1/2}H + \Sigma.
\end{equation}

In a standard way, we write the Stieltjes transform in terms of the resolvent of $W$
\begin{equation}\label{S-transform=trace}
    {g_{\mu}(z)} = \dfrac{1}{N}{\Tr{G(z)}} = \dfrac{1}{N}\sum{G_{ii}(z)}, 
\end{equation}
where 
\begin{equation}\label{resolvent def}
G(z)=(W-z)^{-1}, \quad \Im{z}>0,
\end{equation}
is a resolvent of $W$. One can find main properties of the resolvent in \cite[Proposition 2.1.4, (iv)]{Pa-Sh:11}.

Taking the inverse of a block matrix, using the Schur complement formula, one obtains
\begin{equation}\label{Schur}
    G_{ii}=\dfrac{1}{A_i},
\end{equation}
where
\begin{gather}
\label{A_i def}
    A_i=W_{ii}-z-
    \bm{w}_{i}^*G^{(i)}\bm{w}_{i},\\
    \label{G^(i) def}
    G^{(i)}=(W^{(i)}-z)^{-1},
\end{gather}
and $W^{(i)}$ is an $(N - 1) \times (N - 1)$ matrix obtained from $W$ by removing its $i$\textsuperscript{th} column 
and its $i$\textsuperscript{th} row, $\bm w_i$ is the $i$\textsuperscript{th} column of $W$ without $W_{ii}$.

Since $\Sigma$ is diagonal, we have $\bm w_i = \frac{1}{\sqrt{N}}\bm r_i$. Let us transform the last term in \eqref{A_i def}
\begin{equation}
\begin{split}
    \bm{w}_{i}^*G^{(i)}\bm{w}_{i} = \frac{1}{N} \bm{r}_{i}^*G^{(i)}\bm{r}_{i} &= \dfrac{1}{N}\sum_{j\neq k ;\ j,k\neq i}G_{jk}^{(i)}R_{ij}\overline{R}_{ik} + \dfrac{1}{N}\sum_{j\ne i} G_{jj}^{(i)}(|R_{ij}|^2-\sigma^2) \\ 
    &\quad{}+\sigma^2\left( \dfrac{1}{N}\Tr{G^{(i)}} -\E{\dfrac{1}{N}\Tr{G^{(i)}}} \right)\\
    &\quad{}+\sigma^2\E{\dfrac{1}{N}\Tr{G^{(i)}} -\dfrac{1}{N}\Tr{G}} + \sigma^2\E{g_\mu(z)}.
\end{split}
\end{equation}
Denote
\begin{align}
    r_1 &= \dfrac{1}{N}\sum_{j\neq k ;\ j,k\neq i}G_{jk}^{(i)}R_{ij}\overline{R}_{ik}, \\
    r_2 &= \dfrac{1}{N}\sum_{j=1}^N G_{jj}^{(i)}(|R_{ij}|^2-\sigma^2), \\ 
    r_3 &= \dfrac{\sigma^2}{N}\left( \Tr{G^{(i)}} -\E{\Tr{G^{(i)}}} \right),\\
    r_4 &= \dfrac{\sigma^2}{N}\E{\Tr{G^{(i)}} -\Tr{G}}.
\end{align}
Thus, we have
\begin{equation}
    A_i = A_i^{(0)} + \frac{1}{\sqrt{N}} R_{ii} + \sum\limits_{j = 1}^4 r_j, 
\end{equation}
where $A_i^{(0)} = S_{ii} - z - \sigma^2\E{g_\mu(z)}$. We need the lemma
\begin{lem}\label{lem:r estimate}
    $\E{\abs{r_j}^2} \le \frac{C}{N}$ for $j = 1, 2, 3$, $\abs{r_4} \le \frac{C}{N}$.
\end{lem}
The proof of the lemma is given in Appendix~\ref{ap:lemmas}.


Substituting $\nu$ instead of $\mu$, $S$ instead of $W$, and $z + \sigma^2\E{g_{\mu}(z)}$ instead of $z$ in \eqref{S-transform=trace}--\eqref{G^(i) def}, one obtains
\begin{equation}
    g_{\nu}(z + \sigma^2\E{g_{\mu}(z)}) = \frac{1}{N}\sum\limits_{j = 1}^N \frac{1}{A_j^{(0)}}.
\end{equation}
Hence,
\begin{equation}\label{difference 1}
\begin{split}
    \E{g_{\mu}(z) - g_{\nu}(z + \sigma^2\E{g_{\mu}(z)})} &= \E{\frac{1}{N}\sum\limits_{j = 1}^N \frac{1}{A_j} - \frac{1}{N}\sum\limits_{j = 1}^N \frac{1}{A_j^{(0)}}} \\
    &= \frac{1}{N}\sum\limits_{j = 1}^N \E{\frac{1}{A_j} - \frac{1}{A_j^{(0)}}}.    
\end{split}
\end{equation}
Estimate each term in \eqref{difference 1}
\begin{equation}
    \E{\frac{1}{A_j} - \frac{1}{A_j^{(0)}}} = \E{\frac{A_j^{(0)} - A_j}{\bigl(A_j^{(0)}\bigr)^2}} + \E{\frac{(A_j^{(0)} - A_j)^2}{\bigl(A_j^{(0)}\bigr)^2A_j}}.
\end{equation}
Consider the first term
\begin{equation}
    \abs{\E{\frac{A_j^{(0)} - A_j}{\bigl(A_j^{(0)}\bigr)^2}}} = \abs{\frac{r_4}{\bigl(A_j^{(0)}\bigr)^2}} \le \frac{\abs{r_4}}{\abs{\Im z}^2} \le \frac{C}{N}.
\end{equation}
The second one
\begin{equation}
\begin{split}
    \abs{\E{\frac{(A_j^{(0)} - A_j)^2}{\bigl(A_j^{(0)}\bigr)^2A_j}}} &\le \E{\abs{\frac{(A_j^{(0)} - A_j)^2}{\bigl(A_j^{(0)}\bigr)^2A_j}}} \le \frac{\E{\abs{A_j^{(0)} - A_j}^2}}{\abs{\Im z}^3} \\
    &\le C\biggl(\frac{1}{N}\E{\abs{R_{jj}}^2} + \sum\limits_{i = 1}^4 \E{\abs{r_i}^2}\biggr).
\end{split}
\end{equation}
Using Lemma \ref{lem:r estimate}, one gets
\begin{equation}
\begin{split}
    \abs{\E{\frac{(A_j^{(0)} - A_j)^2}{\bigl(A_j^{(0)}\bigr)^2A_j}}} &\le \frac{C}{N}.
\end{split}
\end{equation}
Therefore, $\abs{\E{g_{\mu}(z)} - \E{g_{\nu}(z + \sigma^2\E{g_{\mu}(z)})}} \le \frac{C}{N}$.
\end{proof}

\begin{prop}\label{pr:eq Stieltjes Wigner}
    Let $W$ be defined in \eqref{W def}. Let $R$ be a square real symmetric matrix whose entries are i.i.d.\ random variables with zero mean and variance $\sigma^2$, and let the ESD $\nu$ of $S$ converge weakly 
    to $\nu_0$ as $N \to \infty$. Then for any fixed $z\in \Compl_+$, the Stieltjes transform $g_{\mu}(z)$ satisfies the following equation
    \begin{equation} \label{eq: St transf new}
        \Ee g_{\mu}(z) - \Ee g_{\nu}(z + \sigma^2\Ee g_{\mu}(z)) = O(N^{-1}), \quad N \to \infty.
    \end{equation}
\end{prop}

\begin{proof}

    Introduce a parametrized family of matrices $W(t)$ in the following way:
    \begin{equation}
        \label{W_parametrized}
        W(t)=\dfrac{1}{\sqrt{N}}R(t) +S,
    \end{equation}
    \begin{equation}
   R(t)=\sqrt{t}R +\sqrt{1-t}H,
    \end{equation}
    where entries of $R$ are i.i.d. random variables, with mean zero and variance $\sigma^2$, and $H$ is a Gaussian matrix (GOE), whose entries have mean zero and variance $\sigma^2$. 


Note that for $t=0$, $R(0)=H$ with Gaussian entries, and we proved \eqref{eq: St transf new} for $t=0$, in Proposition~\ref{pr:eq Stieltjes}. 
Here, we need to show that the same equation holds for $W(1)$, when the random matrix $R(1)=R$ is a Wigner matrix. 

Introduce $G(t,z)$, the resolvent of $W(t)$
\begin{equation}
    G(t,z)=(W(t)-z)^{-1}
\end{equation}
and the Stieltjes transform $g_{\mu}(t,z)$ is
\begin{equation}
    g_{\mu}(t,z)=\dfrac{1}{N}\Ee\{\Tr G(t,z)\} = \dfrac{1}{N}\sum\E{G_{ii}(t,z)}. 
\end{equation}
We prove the following lemma, first.
\begin{lem}\label{lem: g deriv bound}
    $\left|\dfrac{d g_{\mu}(t,z)}{d t}\right|=O\left(\dfrac{1}{N}\right)$ uniformly in $t\in [0,1]$.
\end{lem}
The proof of the lemma is given in Appendix~\ref{ap:lemmas}.
Now, 
Lemma \ref{lem: g deriv bound} implies
\begin{equation}
        \int_{0}^{1} \dfrac{d g_{\mu}(t,z)}{d t}dt= g_{\mu}(1,z) - g_{\mu}(0,z) = O\left(\dfrac{1}{N}\right).
    \end{equation}

Thus, 
    \begin{equation}
        g_{\mu}(1,z) = g_{\mu}(0,z) + O\left(\dfrac{1}{N}\right),
    \end{equation}
where $g_{\mu}(1,z)=g_{\mu}(z)$ is an object of interest in \eqref{eq: St transf new} and for $g_{\mu}(0,z)$ the same statement was proven in Prop.\ref{pr:eq Stieltjes}. As the \eqref{eq: St transf} and \eqref{eq: St transf new} both hold up to an order $O\left(\dfrac{1}{N}\right)$, and $g_{\mu}(1,z)$ and $g_{\mu}(0,z)$ differ from each other by the term of the same order, we conclude that \eqref{eq: St transf new} holds for $g_{\mu}(1,z)$.
\end{proof}

\section{The limiting distribution of eigenvalues outside the bulk}\label{sec:spikes ESD}
In this section, we derive the limiting ESD of outliers. We will need the following lemma:
\begin{lem}\label{Phi(omega)=id}
The functions $\Phi$ and $\omega_{\mu_0}$ defined in \eqref{Phi} and \eqref{omega_tau}, respectively, satisfy the following: $\Phi(\omega_{\mu_0}(z))=z$ for any $z \in \Compl^+ \setminus \supp{\mu_0}$.
\end{lem}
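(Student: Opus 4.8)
The plan is to unfold both definitions and reduce the identity to the Pastur equation \eqref{stilt_mu_nu0}, which is already available as a consequence of the convergence result cited in the introduction. Writing out the composition directly, we have, for $z \in \Compl^+ \setminus \supp \mu_0$,
\begin{equation}
    \Phi(\omega_{\mu_0}(z)) = \omega_{\mu_0}(z) - \sigma^2 g_{\nu_0}(\omega_{\mu_0}(z)) = \bigl(z + \sigma^2 g_{\mu_0}(z)\bigr) - \sigma^2 g_{\nu_0}(\omega_{\mu_0}(z)),
\end{equation}
where the first equality is \eqref{Phi} and the second is \eqref{omega_tau}. So it suffices to show that $g_{\mu_0}(z) = g_{\nu_0}(\omega_{\mu_0}(z))$, which is exactly \eqref{stilt_mu_nu0}. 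Thus the two extra terms cancel and $\Phi(\omega_{\mu_0}(z)) = z$, as claimed.

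The one point that needs a little care is that \eqref{Phi} and \eqref{stilt_mu_nu0} evaluate $g_{\nu_0}$ at the point $\omega_{\mu_0}(z)$, so I should check that this argument lies in $\Compl \setminus \supp \nu_0$, i.e.\ that $g_{\nu_0}$ and hence $\Phi$ are actually defined there. For $z \in \Compl^+$, one has $\Im g_{\mu_0}(z) > 0$ (the Stieltjes transform of a positive measure maps the upper half-plane to itself), hence $\Im \omega_{\mu_0}(z) = \Im z + \sigma^2 \Im g_{\mu_0}(z) > 0$, so $\omega_{\mu_0}(z) \in \Compl^+$ and in particular stays off the real line, where $\supp \nu_0$ lives. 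Therefore every term above is well defined and the manipulation is legitimate.

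I do not anticipate a genuine obstacle here: the lemma is essentially a restatement of the subordination relation \eqref{stilt_mu_nu0} together with the bookkeeping definitions \eqref{omega_tau} and \eqref{Phi}. The only thing worth stating explicitly in the write-up is the sign bookkeeping — $\omega_\tau$ adds $\sigma^2 g_\tau$ while $\Phi$ subtracts $\sigma^2 g_{\nu_0}$ — so that the reader sees why the composition telescopes to the identity rather than, say, to $z + 2\sigma^2 g_{\mu_0}(z)$. If one wished to avoid invoking \eqref{stilt_mu_nu0} as a black box, an alternative is to recall that $\omega_{\mu_0}$ is by construction the subordination function characterized by $g_{\mu_0} = g_{\nu_0} \circ \omega_{\mu_0}$ and $\omega_{\mu_0}(z) = z + \sigma^2 g_{\mu_0}(z)$; but since \eqref{stilt_mu_nu0} is stated in the excerpt, the short route above is preferable.
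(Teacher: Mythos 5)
Your proof is correct and follows essentially the same route as the paper: unfold \eqref{Phi} and \eqref{omega_tau}, then invoke the subordination relation \eqref{stilt_mu_nu0} to cancel the two $\sigma^2$ terms. The extra observation that $\omega_{\mu_0}$ maps $\Compl^+$ into $\Compl^+$ (so $g_{\nu_0}$ is defined at the argument) is a small well-definedness check the paper leaves implicit; it is a worthwhile addition but does not change the structure of the argument.
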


\begin{proof}
Use $\omega_{\mu_0}(z)$ (defined in \eqref{omega_tau}) and $\Phi(z)$ (defined in \eqref{Phi}) to compute $\Phi(\omega_{\mu_0}(z))$
\begin{equation}
    \Phi(\omega_{\mu_0}(z))=\omega_{\mu_0}(z) - \sigma^2g_{\nu_0}(\omega_{\mu_0}(z)).
\end{equation}

Use the right hand side of \eqref{omega_tau} for measure $\tau=\mu_0$ to get

\begin{equation}
    \Phi(\omega_{\mu_0}(z))=z + \sigma^2g_{\mu_0}(z) - \sigma^2g_{\nu_0}(\omega_{\mu_0}(z)).
\end{equation}

Note that $g_{\mu_0}(z) = g_{\nu_0}(\omega_{\mu_0}(z))$ by \eqref{stilt_mu_nu0}.
Thus, we show that for any $z \in \Compl^+ \setminus \supp{\mu_0}$, $\Phi(\omega_{\mu_0}(z))=z$.
\end{proof}

Set
\begin{align}
    \tilde\nu_1 &= \frac{N}{r}(\nu - \nu_0),\\
    \tilde\mu_1 &= \frac{N}{r}(\mu - \mu_0).
\end{align}
\begin{lem}\label{lem:St-trans}
Assume that the signed measure $\tilde\nu_1$ weakly converges to a signed measure $\nu_1$. Then the signed measure $\tilde\mu_1$ also converges weakly in probability to a signed measure $\mu_1$ whose Stieltjes transform $g_{\mu_1}(z)$ satisfies
\begin{equation}
    g_{\mu_1}(z) = g_{\nu_1}(\omega_{\mu_0}(z)) \omega'_{\mu_0}(z)
\end{equation}
for any $z\in \Compl^+$.
\end{lem}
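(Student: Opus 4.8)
The plan is to argue entirely at the level of Stieltjes transforms: I will show that for every fixed $z\in\Compl^+$ the random quantity $g_{\tilde\mu_1}(z)=\tfrac Nr\bigl(g_\mu(z)-g_{\mu_0}(z)\bigr)$ converges in probability to $g_{\nu_1}(\omega_{\mu_0}(z))\,\omega_{\mu_0}'(z)$, and then pass from convergence of Stieltjes transforms to weak convergence of the signed measures. First I would remove the randomness: the variance bound $\Var\{g_\mu(z)\}\le C N^{-2}$ (Gaussian Poincaré inequality, or Theorem~18.2.3 in \cite{Pa-Sh:11}) gives $\Var\{\tfrac Nr(g_\mu(z)-\E{g_\mu(z)})\}\le C r^{-2}\to0$ by Assumption~\ref{asm:inf rank}, so $\tfrac Nr(g_\mu(z)-\E{g_\mu(z)})\to0$ in probability and it suffices to analyse the deterministic quantity $\tfrac Nr(\E{g_\mu(z)}-g_{\mu_0}(z))$.

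Next I would subtract the pre-limit and limit fixed-point equations. Writing $\omega_N(z)=z+\sigma^2\E{g_\mu(z)}$, Proposition~\ref{pr:eq Stieltjes} reads $\E{g_\mu(z)}=g_\nu(\omega_N(z))+O(N^{-1})$, while \eqref{stilt_mu_nu0}--\eqref{omega_tau} give $g_{\mu_0}(z)=g_{\nu_0}(\omega_{\mu_0}(z))$. Using the exact identity $g_\nu=g_{\nu_0}+\tfrac rN g_{\tilde\nu_1}$ and subtracting,
\begin{equation*}
    \E{g_\mu(z)}-g_{\mu_0}(z)=\bigl(g_{\nu_0}(\omega_N(z))-g_{\nu_0}(\omega_{\mu_0}(z))\bigr)+\tfrac rN\,g_{\tilde\nu_1}(\omega_N(z))+O(N^{-1}).
\end{equation*}
By Pastur's theorem $\E{g_\mu(z)}\to g_{\mu_0}(z)$, hence $\omega_N(z)\to\omega_{\mu_0}(z)$, and the $\omega_N(z)$ stay in a fixed compact subset of $\Compl^+$ (since $\Im\omega_N(z)\ge\Im z>0$ and $|\omega_N(z)|\le|z|+\sigma^2/\Im z$); in particular $g_{\nu_0}$ is analytic near $\omega_{\mu_0}(z)$, so the first bracket equals $\sigma^2 g_{\nu_0}'(\omega_{\mu_0}(z))\,(\E{g_\mu(z)}-g_{\mu_0}(z))+O\bigl(|\E{g_\mu(z)}-g_{\mu_0}(z)|^2\bigr)$. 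Collecting the $\E{g_\mu(z)}-g_{\mu_0}(z)$ terms, dividing by the factor $1-\sigma^2 g_{\nu_0}'(\omega_{\mu_0}(z))=\Phi'(\omega_{\mu_0}(z))$ — nonzero because differentiating the identity $\Phi(\omega_{\mu_0}(z))=z$ of Lemma~\ref{Phi(omega)=id} gives $\Phi'(\omega_{\mu_0}(z))\,\omega_{\mu_0}'(z)=1$ with $\omega_{\mu_0}'(z)$ finite and nonzero on $\Compl^+$ — and multiplying by $N/r$, I would obtain
\begin{equation*}
    g_{\tilde\mu_1}(z)+o(1)=\frac{g_{\tilde\nu_1}(\omega_N(z))+O(r^{-1})}{1-\sigma^2 g_{\nu_0}'(\omega_{\mu_0}(z))+o(1)}\ \longrightarrow\ \frac{g_{\nu_1}(\omega_{\mu_0}(z))}{1-\sigma^2 g_{\nu_0}'(\omega_{\mu_0}(z))}=g_{\nu_1}(\omega_{\mu_0}(z))\,\omega_{\mu_0}'(z),
\end{equation*}
where the limit uses $\omega_N(z)\to\omega_{\mu_0}(z)$ together with locally uniform convergence $g_{\tilde\nu_1}\to g_{\nu_1}$ on $\Compl^+$ (coming from the weak convergence $\tilde\nu_1\to\nu_1$ and the second-moment control in Assumption~3), and the last equality is the relation $\Phi'(\omega_{\mu_0}(z))\,\omega_{\mu_0}'(z)=1$ once more.

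Finally, to turn the convergence of Stieltjes transforms into the claimed weak convergence I would observe that the limit $z\mapsto g_{\nu_1}(\omega_{\mu_0}(z))\,\omega_{\mu_0}'(z)$, being a locally uniform limit on $\Compl^+$ of Stieltjes transforms of the signed measures $\tilde\mu_1$, is itself the Stieltjes transform of some signed measure $\mu_1$; combined with tightness of $\{\tilde\mu_1\}$ — obtained by transferring the second-moment hypothesis of Assumption~3 to $\mu$ via $\E{\Tr W^2}=N\sigma^2+\Tr S^2$, so that $\tfrac Nr\int t^2\,d(\mu-\mu_0)=\tfrac Nr\int t^2\,d(\nu-\nu_0)\to\int t^2\,d\nu_1$ — the standard correspondence between convergence of Stieltjes transforms and weak convergence yields $\tilde\mu_1\to\mu_1$ weakly in probability with $g_{\mu_1}(z)=g_{\nu_1}(\omega_{\mu_0}(z))\,\omega_{\mu_0}'(z)$.

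The hard part will be the stability step: making sure the Taylor remainder is genuinely of lower order and that $1-\sigma^2 g_{\nu_0}'(\omega_{\mu_0}(z))$ stays bounded away from zero on the region of interest, and — intertwined with this — extracting enough uniformity from Assumption~3 to replace $g_{\tilde\nu_1}(\omega_N(z))$ by $g_{\nu_1}(\omega_{\mu_0}(z))$ despite the fact that $\|\tilde\nu_1\|_{\mathrm{TV}}\to\infty$ (a normal-families argument using the second-moment bound should do it). The same total-variation blow-up reappears in the last step when converting Stieltjes-transform convergence into weak convergence of the signed measures, and must be handled there as well.
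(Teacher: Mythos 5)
Your proposal follows essentially the same route as the paper's proof: you linearize the fixed-point equation of Proposition~\ref{pr:eq Stieltjes} around $\omega_{\mu_0}(z)$, solve for the order-$r/N$ term using the identity $1-\sigma^2 g_{\nu_0}'(\omega_{\mu_0}(z))=\Phi'(\omega_{\mu_0}(z))=1/\omega_{\mu_0}'(z)$, and then convert Stieltjes-transform convergence to weak convergence via a variance bound plus the second-moment tightness estimate from Assumption~3. The only cosmetic difference is that the paper expands directly in $\omega_\mu(z)=\omega_{\mu_0}(z)+\frac rN\sigma^2 g_{\tilde\mu_1}(z)$ rather than subtracting the two fixed-point equations, which is algebraically identical.
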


\begin{proof}

%
Note that \eqref{eq: St transf} can be presented in the form 
\begin{equation}\label{eq before lim}
    f_\mu(z) = f_\nu(\Ee\omega_{\mu}(z)) + O(N^{-1}),
\end{equation}
where 
\begin{equation}
    f_{\tau}(z) := \E{g_\tau(z)}.
\end{equation}
Expand both sides of \eqref{eq before lim}. Taking into account that
\begin{equation*}
    \omega_\mu(z) = \omega_{\mu_0}(z) + \frac{r}{N}\sigma^2g_{\tilde\mu_1}(z),
\end{equation*}
we get
\begin{equation}
    \begin{split}
        f_{\mu_0}(z) + \tfrac{r}{N}f_{\tilde\mu_1}(z) ={}& f_{\nu_0}(\Ee\omega_{\mu_0}(z)) \\
        &+ \tfrac{r}{N}\Bigl\lbrace f'_{\nu_0}(\Ee\omega_{\mu_0}(z))\sigma^2f_{\tilde\mu_1}(z) \\
        &+ f_{\tilde\nu_1}(\Ee\omega_{\mu_0}(z))\Bigr\rbrace + o\left( \tfrac{r}{N} \right).
    \end{split}
\end{equation}
Since $f_{\mu_0}(z) = f_{\nu_0}(\Ee\omega_{\mu_0}(z))$, we have
\begin{equation}
\label{r N order terms}
f_{\tilde\mu_1}(z) = f'_{\nu_0}(\Ee\omega_{\mu_0}(z))\sigma^2f_{\tilde\mu_1}(z) +  f_{\tilde\nu_1}(\Ee\omega_{\mu_0}(z)) + o(1).
\end{equation}

Note that $\mu_0$ and $\nu_0$ 
 are non-random measures, thus we can remove expected values from \eqref{r N order terms} and get 
\begin{equation}
\label{r N order terms 2}
f_{\tilde\mu_1}(z) = g'_{\nu_0}(\omega_{\mu_0}(z))\sigma^2f_{\tilde\mu_1}(z) +  f_{\tilde\nu_1}(\omega_{\mu_0}(z)) + o(1).
\end{equation}

Solving \eqref{r N order terms 2} for $f_{\tilde\mu_1}(z)$,  we obtain
\begin{equation}
\label{g_mu_0_intermediate}
f_{\tilde\mu_1}(z) = f_{\tilde\nu_1}(\omega_{\mu_0}(z)) \dfrac{1}{1-\sigma^2g'_{\nu_0}(\omega_{\mu_0}(z))} + o(1).
\end{equation}


Now let us prove that the measure $\tilde\mu_1$ has a weak limit as $N \to \infty$. It is sufficient to check that
\begin{enumerate}[label=(\roman*)]
    \item\label{tails} $\int_M^\infty d\tilde\mu_1$ uniformly in $N$, in probability goes to zero as $M \to \infty$;
    \item\label{Var} $\Var\lbrace g_{\tilde\mu_1}(z)\rbrace \to 0$ as $N \to \infty$;
    \item\label{Expect} $f_{\tilde\mu_1}(z)$ converges uniformly on compacts as $N \to \infty$.
\end{enumerate}

\ref{Expect} follows from Proposition \ref{pr:eq Stieltjes Wigner}. \ref{Var} follows from 
the Poincar\'e inequality. Indeed, consider $g_{\tilde\mu_1}(z)$ as a function of 
$R$
\begin{equation}
    \Psi(R) = \Psi_z(R) := g_{\tilde\mu_1}(z) = \frac{1}{r}\Tr\left(N^{-1/2}R + S - z\right)^{-1} - \frac{N}{r}g_{\mu_0}(z).
\end{equation}
The Poincar\'e inequality implies
\begin{equation}\label{Poincare1}
    \Var\lbrace g_{\tilde\mu_1}(z)\rbrace = \Var\lbrace \Psi(R)\rbrace \le \E{\sigma^2\abs{\nabla \Psi(R)}^2}.
\end{equation}
Next, let us calculate the derivatives of $\Psi$. Differentiating the identity $(W - z)G(z) = I$, where $G(z)$ is defined in \eqref{resolvent def}, one gets
\begin{align}
    \frac{\partial}{\partial R_{jk}}G(z) &= -N^{-1/2}G(z)(E_{jk} + E_{kj})G(z), \quad j < k, \\
    \frac{\partial}{\partial R_{jj}}G(z) &= -N^{-1/2}G(z)E_{jj}G(z),
\end{align}
with $E_{jk}$ being coordinate matrices, i.e. all their entries are zeros except one unit in the $j$\textsuperscript{th} row and $k$\textsuperscript{th} column. Taking trace of both sides and dividing by $r$, we have
\begin{align}
    \frac{\partial}{\partial R_{jk}}\Psi(R) &= -\frac{1}{r\sqrt{N}}((G^2)_{jk} + (G^2)_{kj}), \quad j < k, \\
    \frac{\partial}{\partial R_{jj}}\Psi(R) &= -\frac{1}{r\sqrt{N}}(G^2)_{jj}.
\end{align}
Therefore,
\begin{equation}\label{Poincare2}
    \E{\sigma^2\abs{\nabla \Psi(R)}^2} \le \frac{2\sigma^2}{r^2N}\Tr (G(z))^4 \le \frac{2\sigma^2}{r^2} \norm{(G(z))^4} \le \frac{2\sigma^2}{r^2 \abs{\Im z}^4}.
\end{equation}
The last inequality follows from $\norm{G(z)} \le 1/\abs{\Im z}$, see e.g.,\ \cite[Proposition 2.1.4, (iv)]{Pa-Sh:11}).
Since $r \to \infty$ as $N \to \infty$, the combination of \eqref{Poincare1} and \eqref{Poincare2} yields \ref{Var}.

To prove \ref{tails}, consider
\begin{equation}\label{v_2}
    v_2 := \int t^2 d\tilde\mu_1(t) = \frac{N}{r} \left(\int t^2 d\mu(t) - \int t^2 d\mu_0(t)\right) = \frac{N}{r} \left(\frac{1}{N}\Tr W^2 - \int t^2 d\mu_0(t)\right). 
\end{equation}
The second moment of the measure $\mu_0$ can be recovered from the equation on its Stieltjes transform \eqref{stilt_mu_nu0}
\begin{equation}\label{v_2 of mu_0 and nu_0}
    \int t^2 d\mu_0(t) = \sigma^2 + \int t^2 d\nu_0(t).
\end{equation}
Then,
\begin{equation}\label{Tr W^2}
    \Tr W^2 = \Tr \left( \frac{1}{\sqrt{N}} R  + S \right)^2 = \frac{1}{N} \Tr R^2 + \frac{2}{\sqrt{N}} \Tr RS + \Tr S^2.
\end{equation}
Substituting \eqref{v_2 of mu_0 and nu_0} and \eqref{Tr W^2} into \eqref{v_2}, one obtains
\begin{equation}
    \begin{split}
    v_2 &= \frac{N}{r}\left[\frac{1}{N^2} \Tr R^2 - \sigma^2\right] + \frac{2}{r\sqrt{N}} \Tr RS + \frac{N}{r}\left[\frac{1}{N}\Tr S^2 - \int t^2 d\nu_0(t)\right]  \\
    &=: \Sigma_1 + \Sigma_2 + \Sigma_3.
    \end{split}
\end{equation}
$\Sigma_3$ converges to $\int t^2 d\nu_1$ by Assumption \ref{asm:nu_1}. Chebyshev inequality yields
\begin{align}
    \label{Cheb1}
    P\left(\abs{\frac{N}{r}\left[\frac{1}{N^2} \Tr R^2 - (1 + N^{-1})\sigma^2\right]} > \varepsilon\right) \le \frac{\Var\left\lbrace\frac{1}{Nr} \Tr R^2 \right\rbrace}{\varepsilon^2},\\
    P\left(\abs{\frac{1}{r\sqrt{N}} \Tr RS} > \varepsilon\right) \le \frac{\Var\left\lbrace\frac{1}{r\sqrt{N}} \Tr RS \right\rbrace}{\varepsilon^2}.
\end{align}
A straightforward check gives
\begin{equation}\label{var1}
    \Var\left\lbrace \frac{1}{r\sqrt{N}} \Tr RS \right\rbrace = \frac{\sigma^2}{r^2N} \Tr S^2 \quad \text{and} \quad 
    \Var\left\lbrace \frac{1}{Nr} \Tr R^2 \right\rbrace \le \frac{C}{r^{2}}.
\end{equation}
Since $\frac{1}{Nr^2}\Tr S^2 = O(r^{-2})$, $N \to \infty$, equations \eqref{Cheb1}--\eqref{var1} justify that $\Sigma_1$ and $\Sigma_2$ converge to zero in probability. Hence, as $N \to \infty$, $v_2$ converges to $\int t^2 d\nu_1$. It implies \ref{tails}.


Now we can take the limit in \eqref{g_mu_0_intermediate}. We get
\begin{equation}
\label{g_mu_1 limiting}
g_{\mu_1}(z) = g_{\nu_1}(\omega_{\mu_0}(z)) \dfrac{1}{1-\sigma^2g'_{\nu_0}(\omega_{\mu_0}(z))}.
\end{equation}

Since $\Phi(\omega_{\mu_0}(z))=z$ for any  $z \in \Compl^+ \setminus \supp{\mu_0}$, where $\Phi$ is defined in \eqref{Phi}, it is sufficient to verify that
the denominator of \eqref{g_mu_1 limiting} is equal to $\Phi'(\omega_{\mu_0}(z))$, i.e.

\begin{equation}
\label{identity_Phi}
   \Phi'(\omega_{\mu_0}(z)) = 1-\sigma^2g'_{\nu_0}(\omega_{\mu_0}(z)) , 
\end{equation}


Taking derivative in \eqref{Phi} we obtain 
\begin{equation}
\Phi'(z)=  1-\sigma^2g'_{\nu_0}(z). 
\end{equation}

Substituting $z \to \omega_{\mu_0}(z)$ into the equation above, we get \eqref{identity_Phi}.

By Lemma~\ref{Phi(omega)=id}, $\Phi(\omega_{\mu_0}(z))=z$ for any  $z \in \Compl^+ \setminus \supp{\mu_0}$. Hence, 
\begin{equation}
    \Phi'(\omega_{\mu_0}(z))=\dfrac{1}{\omega'_{\mu_0}(z)}.
\end{equation}

Use this fact and \eqref{g_mu_1 limiting} to conclude that 
\begin{equation}
    g_{\mu_1}(z) = g_{\nu_1}(\omega_{\mu_0}(z)) \omega'_{\mu_0}(z).
\end{equation}
\end{proof}

Next, we present the proof of Theorem~\ref{th:lim ESD}.

\begin{proof}[Proof of Theorem \ref{th:lim ESD}]

Let $z=x+iy$, $x \notin \supp \mu_0$.

As we have shown before in Lemma \ref{lem:St-trans}, the signed measure $\tilde\mu_1 = \frac{N}{r}(\mu - \mu_0)$ converges weakly in probability to the signed measure $\mu_1$ and its Stieltjes transform satisfies
\begin{equation}
    g_{\mu_1}(z) = g_{\nu_1}(\omega_{\mu_0}(z)) \omega'_{\mu_0}(z).
\end{equation} 
Then, by the Stieltjes-Perron inversion formula (e.g.,\ see \cite[Proposition 2.1.2, (vii)]{Pa-Sh:11}), one gets
\begin{equation}\label{rho_1 1}
\begin{split}
    \rho_{\mu_1}(x) &= \lim_{y\to +0}\dfrac{1}{\pi}\Im g_{\mu_1}(x+iy) = \lim_{y\to +0}\dfrac{1}{\pi}\Im \left(g_{\nu_1}(\omega_{\mu_0}(x+iy)) \omega'_{\mu_0}(x+iy) \right) \\
    &= \lim_{y\to +0}\dfrac{1}{\pi}\left( \Im g_{\nu_1}(\omega_{\mu_0}(x+iy)) \Re \omega'_{\mu_0}(x+iy) + \Re g_{\nu_1}(\omega_{\mu_0}(x+iy)) \Im \omega'_{\mu_0}(x+iy) \right).
\end{split}
\end{equation}

Consider the first term in \eqref{rho_1 1}. We would like to show
\begin{equation}
\label{lim2}
\lim_{y\to +0}\frac{1}{\pi}\Im g_{\nu_1}(\omega_{\mu_0}(x+iy)) = \rho_{\nu_1}(\omega_{\mu_0}(x)).
\end{equation}
using Proposition 2.1.2, (vii), from \cite{Pa-Sh:11}. To this end one needs to check that $\omega_{\mu_0}(x + iy)$ is in the upper half-plane and a trajectory of $\omega_{\mu_0}(x + iy)$, as $y \to +0$, is non-tangential to the real line. Firstly, from the definition \eqref{omega_tau} it follows that 
\begin{equation}
\label{im_omega_mu_0}
    \Im \omega_{\mu_0}(x+iy) = \Im ( (x+iy) + \sigma^2g_{\mu_0}(x+iy)) = y + \sigma^2\Im g_{\mu_0}(x+iy).
\end{equation}
By the property (ii) in \cite[Proposition 2.1.2]{Pa-Sh:11} of the Stieltjes transform, $y\Im g_{\mu_0}(x+iy)>0$. Therefore, for $y > 0$ 
\begin{equation}\label{im_omega>y}
    \Im \omega_{\mu_0}(x+iy) > y > 0.
\end{equation}
Secondly, the definition \eqref{omega_tau} implies that $\omega_{\mu_0}(z)$ is a holomorphic function for any $z\in \mathbb{C}\setminus \supp{\mu_0}$. An expansion
\begin{align}
    \omega_{\mu_0}(x+iy)&=\omega_{\mu_0}(x) + iy\omega'_{\mu_0}(x) + O(y^2),\text{ as }y \to +0
\end{align}
yield that the trajectory of $\omega_{\mu_0}(x + iy)$ is non-tangential to the real line. Thus, \eqref{lim2} holds and therefore
\begin{equation}\label{lim term1}
    \lim_{y\to +0}\frac{1}{\pi}\Im g_{\nu_1}(\omega_{\mu_0}(x+iy)) \Re \omega'_{\mu_0}(x+iy) = \rho_{\nu_1}(\omega_{\mu_0}(x)) \omega'_{\mu_0}(x).
\end{equation}

Let us consider the second term in \eqref{rho_1 1}, $\Re g_{\nu_1}(\omega_{\mu_0}(x+iy)) \Im \omega'_{\mu_0}(x+iy)$. For $w = x_1 + iy_1$ one gets
\begin{equation}
    \Re g_{\nu_1}(w) = \int \frac{(t - x_1) d\nu_1(t)}{(t - x_1)^2 + y_1^2}
\end{equation}
and
\begin{equation}
\begin{split}
    \abs{\Re g_{\nu_1}(w)} \le \abs{\int_{\abs{t - x_1} \le y_1} + \int_{\abs{t - x_1} > y_1} \frac{(t - x_1) d\nu_1(t)}{(t - x_1)^2 + y_1^2}} \\
    \le \int_{\abs{t - x_1} \le y_1} \frac{ d\abs{\nu_1}(t)}{y_1} + \int_{\abs{t - x_1} > y_1} \frac{d\abs{\nu_1}(t)}{\abs{t - x_1}} \le C - C\log y_1.
\end{split}
\end{equation}
Thus, taking into account \eqref{im_omega>y}, one obtains, as $y \to +0$,
\begin{equation}
    \abs{\Re g_{\nu_1}(\omega_{\mu_0}(x+iy))} \le C(1 - \log \Im \omega_{\mu_0}(x+iy)) \le C(1 - \log y).
\end{equation}
Next, 
\begin{equation}
    \Im \omega'_{\mu_0}(x+iy) = \sigma^2 \Im g_{\mu_0}'(x + iy) = 2\sigma^2 \int \frac{(t - x)y d\mu_0(t)}{((t - x)^2 + y^2)^2}.
\end{equation}
It implies
\begin{equation}
    \lim\limits_{y \to +0} \frac{\Im \omega'_{\mu_0}(x+iy)}{y} = \sigma^2 \int \frac{d\mu_0(t)}{(t - x)^3}.
\end{equation}
Hence, $\Im \omega'_{\mu_0}(x+iy) \le Cy$ and
\begin{equation}\label{estimation 1}
    \abs{\Re g_{\nu_1}(\omega_{\mu_0}(x+iy)) \Im \omega'_{\mu_0}(x+iy)} \le C(1 - \log y)y \to 0 \quad \text{as } y \to +0.
\end{equation}
Therefore,
\begin{equation}\label{lim term2}
    \lim_{y\to +0}\frac{1}{\pi}\Re g_{\nu_1}(\omega_{\mu_0}(x+iy)) \Im \omega'_{\mu_0}(x+iy) = 0.
\end{equation}

A combination of \eqref{rho_1 1}, \eqref{lim term1}, and \eqref{lim term2} gives us
\begin{equation}
    \rho_{\mu_1}(x) = \rho_{\nu_1}(\omega_{\mu_0}(x))\omega_{\mu_0}'(x).
\end{equation}
Finally,
\begin{equation}
    \mu_1(\Delta) = \int_\Delta \rho_{\mu_1}(x)dx = \int_\Delta \rho_{\nu_1}(\omega_{\mu_0}(x))\omega_{\mu_0}'(x)dx = \int_{\omega_{\mu_0}(\Delta)} \rho_{\nu_1}(x_1)dx_1 = \nu_1(\omega_{\mu_0}(\Delta)).
\end{equation}
    
\end{proof}



\section{Proof of Theorem \ref{th:jth eval}}\label{sec:th2}

In this section, we present the proof of Theorem~\ref{th:jth eval}.

According to Theorem \ref{th:lim ESD}, $\supp \mu_1^+ = [a, b]$. Pick an arbitrary positive integer $k$. Let the points $\zeta_1 \le \dotsb \le \zeta_{k - 1}$ split the segment $[a, b]$ into $k$ equal parts, i.e.
    \begin{equation}
        \zeta_i = a + \frac{i}{k}(b - a), \quad 0 < i < k.
    \end{equation}
    The continuity of $\Phi(x)$, Lemma~\ref{Phi(omega)=id}, and \eqref{strong attraction (r)} imply $\lim_{N \to \infty} \max_{1 \le j \le r} \dist (\lambda_j(W), [a, b]) = 0$ and $\lambda_1(W) \le c$ for some $c$. Fix an arbitrary $\varepsilon_1, \varepsilon_2 > 0$ and consider functions
    \begin{equation}
        h_i(x) = \begin{cases}
            0, & x < \zeta_i \text{ or } x > c + \varepsilon_1 \\
            \frac{x - \zeta_i}{\varepsilon_1}, & \zeta_i \le x < \zeta_i + \varepsilon_1 \\
            1, & \zeta_i + \varepsilon_1 \le x < c \\
            \frac{c + \varepsilon_1 - x}{\varepsilon_1}, & c \le x \le c + \varepsilon_1.
        \end{cases}
    \end{equation}
    \begin{center}
    \begin{tikzpicture}[scale=1.2]
    \def\a{2}
    \def\c{2.5}
    \def\d{4.5}
    \def\b{5}
  \draw[->] (-1,0) -- (7,0) node[right] {$x$};
  \draw[->] (0,-0.5) -- (0,2) node[above] {$h_i(x)$};

  \draw[very thick,blue] (\c,1) -- (\d,1);

  \draw[very thick,blue] (\a,0) -- (\c,1);
  \draw[very thick,blue] (\d,1) -- (\b,0);

  \draw[very thick,blue] (-1,0) -- (\a,0);
  \draw[very thick,blue] (\b,0) -- (7,0);
  \node[below] at (\a,0) {$\zeta_i$};
  \node[below] at (\c+0.2,0) {$\zeta_i+\varepsilon_1$};
  \node[below] at (\d,0) {$c$};
  \node[below] at (\b+0.2,0) {$c+\varepsilon_1$};
  \node[left] at (0,1) {$1$};

  \filldraw[black] (0,1) circle (1pt);
  \filldraw[blue] (\a,0) circle (1pt);
  \filldraw[blue] (\b,0) circle (1pt);
  \filldraw[blue] (\c,1) circle (1pt);
  \filldraw[blue] (\d,1) circle (1pt);
  \filldraw[black] (\c,0) circle (1pt);
  \filldraw[black] (\d,0) circle (1pt);
\end{tikzpicture}
    \end{center}
    
    Take any positive $\varepsilon_3$ such that
    \begin{equation}\label{eps_3 restriction}
        \varepsilon_3 < \frac{1}{2}\min_i \int (h_{i}(x) - h_{i + 1}(x)) d\mu_1(x).
    \end{equation}
    Assumption 3 and Theorem \ref{th:lim ESD} yield that there exists $N_0$ such that for any $N > N_0$
    \begin{align}
        \label{nu-difference}
        \sum_i\abs{\tfrac{N}{r} \int h_i(\Phi(x)) d(\nu(x) - \nu_0(x)) - \int h_i(\Phi(x)) d\nu_1(x)} < \varepsilon_3, \\
        \label{mu-difference}
        P\left( \sum_i\abs{\tfrac{N}{r} \int h_i(x) d\mu(x) - \int h_i(x) d\mu_1(x)} > \varepsilon_3 \right) < \varepsilon_2.
    \end{align}
    Fix some $N > N_0$ and let $\lambda_{j(N)}(W) \in [\zeta_m, \zeta_{m + 1}]$. Using \eqref{nu-difference}, we get
    \begin{equation}\label{ineq1}
        \begin{split}
        \#\lbrace {i \mid \lambda_i(S) \ge \omega_{\mu_0}(\zeta_{m + 3})} \rbrace &= N\int 1_{\omega_{\mu_0}(\zeta_{m + 3}) \le x \le c}(x) d\nu(x) \le N\int h_{m + 2}(\Phi(x)) d\nu(x) \\
        &< r\int h_{m + 2}(\Phi(x)) d\nu_1(x) + r\varepsilon_3.
        \end{split}
    \end{equation}
    By Theorem \ref{th:lim ESD} and by the choice of $\varepsilon_3$ \eqref{eps_3 restriction},
    \begin{equation}\label{ineq2}
        \int h_{m + 2}(\Phi(x)) d\nu_1(x) = \int h_{m + 2}(x) d\mu_1(x) < \int h_{m + 1}(x) d\mu_1(x) - 2\varepsilon_3.
    \end{equation}
    \eqref{mu-difference} implies
    \begin{equation}\label{ineq3}
    \begin{split}
        r\int h_{m + 1}(x) d\mu_1(x) - r\varepsilon_3 < N\int h_{m + 1}(x) d\mu(x) \le \#\lbrace {i \mid \lambda_i(W) > \zeta_{m + 1}} \rbrace < j(N)
    \end{split}
    \end{equation}    
    with probability greater or equal to $1 - \varepsilon_2$. Hence, combining \eqref{ineq1}--\eqref{ineq3} one obtains $\lambda_{j(N)}(S) < \omega_{\mu_0}(\zeta_{m + 3})$. Similarly,
    \begin{equation}
    \begin{split}
        \#\lbrace {i \mid \lambda_i(S) \ge \omega_{\mu_0}(\zeta_{m - 2})} \rbrace &\ge N\int h_{m - 2}(\Phi(x)) d\nu(x) > r\int h_{m - 2}(\Phi(x)) d\nu_1(x) - r\varepsilon_3 \\
        &> r\int h_{m - 1}(x) d\mu_1(x) + r\varepsilon_3 > N\int h_{m - 1}(x) d\mu(x) \\
        &\ge \#\lbrace {i \mid \lambda_i(W) > \zeta_{m}} \rbrace \ge j(N)
    \end{split}
    \end{equation}
    and $\omega_{\mu_0}(\zeta_{m - 2}) < \lambda_{j(N)}(S)$. Thus, $\zeta_{m - 2} < \Phi(\lambda_{j(N)}(S)) < \zeta_{m + 3}$ which implies
    \begin{equation}
        P\left[\abs{\lambda_{j(N)}(W) - \Phi(\lambda_{j(N)}(S))} < \frac{3}{k}(b - a)\right] > 1 - \varepsilon_2.
    \end{equation}
    Since $k$ and $\varepsilon_2$ are arbitrary, the last equation yields the convergence \eqref{diff lambda_j} in probability.

\appendix
\section{Numerical simulations}\label{a:numerics}
For the numerical simulations below, the following DNN was considered. 

Networks were trained for 25 epochs on the Fashion MNIST dataset. Each network has 3~weight layers of the following sizes: $784\times N$, $N\times N$, $N\times 10$, where the matrix size $N$ changes from 200 to 3000, with step 200.
For every $N$ the network achieves test accuracy~88-89\%. \footnote{The code can be found at \href{https://github.com/Marii-a-K/dnn_spikes_code/blob/848b692e18a3a9695bb9e836220a5dd8bdc58176/growing_number_of_spikes_numerics.py}{GitHub by the following link}.}

\begin{figure}[H]
    \centering
    \includegraphics[width=0.45\linewidth]{Figures/spikes_wrt_N_2.jpg}    \includegraphics[width=0.45\linewidth]{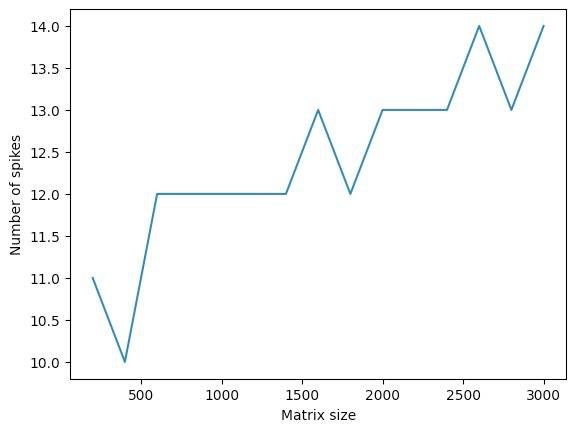}
    \caption{Numerical simulations of a DNN with 3 layers. Two figures correspond to two different realizations with the same architecture.}
\end{figure}
These numerical simulations demonstrate the dependence of the number of eigenvalues of the ``signal'' matrix on the size of the middle layer matrix.

\section{Proof of auxiliary lemmas}\label{ap:lemmas}

\begin{proof}[Proof of Lemma \ref{lem:r estimate}]
Let $\Ee_i$ be an expectation with respect to $R_{ij}$, $j = 1, \dotsc, N$. We have
\begin{equation}
   \E[i]{r_{1}\overline{r_{1}}} = \dfrac{1}{N^2}\sum_{j_1\neq k_1 ;\ j_1,k_1\neq i}^N\sum_{j_2\neq k_2 ;\ j_2,k_2\neq i}^N G_{j_1k_1}^{(i)}\overline{G}_{j_2k_2}^{(i)}\E[i]{R_{ij_1}\overline{R}_{ik_1}R_{ij_2}\overline{R}_{ik_2}}.
\end{equation}
If $j_1 \ne j_2$ and $j_1 \ne k_2$, then $R_{ij_1}$ and $\overline{R}_{ik_1}R_{ij_2}\overline{R}_{ik_2}$ are independent, which implies 
\begin{equation}
\E[i]{R_{ij_1}\overline{R}_{ik_1}R_{ij_2}\overline{R}_{ik_2}} = \E[i]{R_{ij_1}}\E[i]{\overline{R}_{ik_1}R_{ij_2}\overline{R}_{ik_2}} = 0.
\end{equation}
Similarly, the expectation is also zero if $k_1 \ne j_2$ and $k_1 \ne k_2$. Hence,
\begin{equation}
\begin{split}
   \E[i]{r_{1}\overline{r_{1}}} &= \dfrac{1}{N^2}\sum_{\substack{j_1\neq k_1 \\ j_1,k_1\neq i}} G_{j_1k_1}^{(i)}\overline{G}_{k_1j_1}^{(i)}\E[i]{\abs{R_{ij_1}R_{ik_1}}^2} \\
   &\quad{}+ \dfrac{1}{N^2}\sum_{\substack{j_1\neq k_1 \\ j_1,k_1\neq i}}^N \abs{G_{j_1k_1}^{(i)}}^2\E[i]{R_{ij_1}^2\overline{R}_{ik_1}^2} \\
   &\le \frac{2\sigma^4}{N^2} \Tr G^{(i)} (G^{(i)})^* \le \frac{2\sigma^4}{N} \norm{G^{(i)}}^2 \le \frac{2\sigma^4}{N\abs{\Im z}^2}.
\end{split}
\end{equation}
So, $\E{\abs{r_1}^2} \le \frac{C}{N}$. Let us proceed to derive an estimate on $r_2$. We get
\begin{equation}
\begin{split}
   \E[i]{r_{2}\overline{r_{2}}} &= \dfrac{1}{N^2}\sum_{j_1\neq i}\sum_{j_2\neq i} G_{j_1j_1}^{(i)}\overline{G}_{j_2j_2}^{(i)}\E[i]{(\abs{R_{ij_1}}^2 - \sigma^2)(\abs{R_{ij_2}}^2 - \sigma^2)} \\
   &= \dfrac{1}{N^2}\sum_{j\neq i} \abs{G_{jj}^{(i)}}^2\E[i]{(\abs{R_{ij}}^2 - \sigma^2)^2} \le \frac{c_4}{N^2}\Tr G^{(i)} (G^{(i)})^* \le \frac{c_4}{N\abs{\Im z}^2},
\end{split}
\end{equation}
because $\E[i]{(\abs{R_{ij}}^2 - \sigma^2)^2} \le \E[i]{\abs{R_{ij}}^4} =: c_4$. Further, $r_3$ can be written as follows
\begin{equation}
    \E{r_3\overline{r_3}} = \Var \left\lbrace \frac{1}{N} \Tr G^{(i)} \right\rbrace \le \frac{C}{N},
\end{equation}
where the last inequality is given in (18.2.13) of Theorem 18.2.3 in \cite{Pa-Sh:11}. Since
\begin{equation}
    \abs{\Tr G^{(i)} - \Tr G} \le \frac{1}{\abs{\Im z}},
\end{equation}
one has $r_4 \le \frac{C}{N}$.
\end{proof}


 \begin{proof}[Proof of Lemma \ref{lem: g deriv bound}]
 
Take a derivative of $g(t,z)$ with respect to $t$:
\begin{equation}
    \dfrac{\partial}{\partial t}g_{\mu}(t,z)= \dfrac{1}{N}\Ee\{\Tr \dfrac{\partial}{\partial t}G(t,z)\},
\end{equation}
\begin{equation}
     \dfrac{\partial}{\partial t}G(t,z) = -G\cdot\dfrac{\partial}{\partial t}W(t,z)\cdot G.
\end{equation}
Using \eqref{W_parametrized} and the fact that $S$ do not depend on $t$, we get 
\begin{equation}
     \dfrac{\partial}{\partial t}W(t,z) = \frac{1}{2\sqrt{t}\sqrt{N}}R -\frac{1}{2\sqrt{1-t}\sqrt{N}}H. 
\end{equation}
Thus, 
\begin{equation}
\label{eq deriv_g_mu}
    \dfrac{\partial}{\partial t}g_{\mu}(t,z)= -\dfrac{1}{2N}\left( \dfrac{1}{\sqrt{t}\sqrt{N}}\Ee\{\Tr(GRG) \} - \dfrac{1}{\sqrt{1-t}\sqrt{N}}\Ee\{\Tr(GHG) \} \right).
\end{equation}

Rewrite \eqref{eq deriv_g_mu} in terms of the entries of $R$, $H$, and $G$:
\begin{equation}
    \Ee\{\Tr(GRG) \} = \Ee\{\Tr(RG^2)\} = \Ee\{\sum_{i,j=1}^{N}(R)_{ij}(G^2)_{ji} \} =
   \sum_{i,j=1}^{N}\Ee\{(R)_{ij}(G^2)_{ji} \}
\end{equation}
and following the same steps, get a similar representation for the Gaussian term, then \eqref{eq deriv_g_mu} becomes 
\begin{equation}
\label{eq deriv_g_mu simplified}
    \dfrac{\partial}{\partial t}g_{\mu}(t,z)= -\dfrac{1}{2N}\left( \dfrac{1}{\sqrt{t}\sqrt{N}}\sum_{i,j=1}^{N}\Ee\{(R)_{ij}(G^2)_{ji}\} - \dfrac{1}{\sqrt{1-t}\sqrt{N}}\sum_{i,j=1}^{N}\Ee\{(H)_{ij}(G^2)_{ji} \} \right).
\end{equation}

For the Wigner term above, use Proposition 18.1.4 from \cite{Pa-Sh:11} for $p=2$, random variable $(R)_{ij}$, and a function $(G^2)_{ji}$ to get the following decompositions for any $1\leq i,j\leq N$: 
\begin{equation}
\label{eq: decomp Wigner}
    \Ee\{(R)_{ij}(G^2)_{ji} \} = \kappa_{1}^{R}\Ee\{ (G^2)_{ji}\} + \kappa_{2}^{R}\Ee\{\dfrac{\partial (G^2)_{ji}}{\partial (R)_{ij}}\} + \dfrac{\kappa_{3}^{R}}{2}\Ee\{\dfrac{\partial^2 (G^2)_{ji}}{\partial (R)_{ij}^2}\}+\varepsilon_{ij},
\end{equation}
where $\kappa_{i}^{R}$ are the cumulants of $R$,  and $\varepsilon_{ij}$ is bounded in the following way: 
\begin{equation}
    |\varepsilon_{ij}  |\leq C_2\Ee\{|(R)_{ij}^{4}|\}\sup_{R_{ij}\in \mathbb{R}}\abs{\dfrac{\partial^{3} (G^2)_{ji}}{\partial (R)_{ij}^{3}}},
\end{equation} and $C_2$ -- constant.
For the Gaussian term, use Lemma 2.1.5 from \cite{Pa-Sh:11}:
\begin{equation}
\label{eq: decomp Gauss}
    \Ee\{(H)_{ij}(G^2)_{ji} \} =  \Ee\{(H)_{ij}^2 \}\Ee\bigg\{\dfrac{\partial (G^2)_{ji}}{\partial (H)_{ij}}\bigg\}. 
\end{equation}

Compute the terms in \eqref{eq: decomp Wigner} and \eqref{eq: decomp Gauss} one by one, starting from the first derivatives. 

In order to find $\Ee\bigg\{\dfrac{\partial (G^2)_{ji}}{\partial (R)_{ij}}\bigg\}$, use the defining property of the resolvent $G$:
\begin{equation}
    G^2(W-zI)^{2}=1
\end{equation}
and differentiate it with respect to $(R)_{ij}$:
\begin{equation}
    \dfrac{\partial}{\partial (R)_{ij}}(G^2(W-zI)^{2})=0
\end{equation}
\begin{equation}
\label{first_deriv_inter}
    \dfrac{\partial G^2}{\partial (R)_{ij}}(W-zI)^{2} + G^2\dfrac{\partial}{\partial (R)_{ij}}((W-zI)^{2})=0
\end{equation}
\begin{equation}
\label{eq: G first_deriv_inter_2}
    \dfrac{\partial G^2}{\partial (R)_{ij}}(W-zI)^{2} + G^2\left( \dfrac{\partial W^2}{\partial (R)_{ij}} - 2z\dfrac{\partial W}{\partial (R)_{ij}} \right)=0.
\end{equation}
Here, $\dfrac{\partial W}{\partial (R)_{ij}} = \dfrac{\sqrt{t}}{\sqrt{N}} (E_{ij} + E_{ji})$, with $E_{ij}$ being coordinate matrices, i.e. all their entries are zeros except one unit in the $i$\textsuperscript{th} row and the $j$\textsuperscript{th} column, and
\begin{equation}
\dfrac{\partial W^2}{\partial (R)_{ij}}=\dfrac{\partial W}{\partial (R)_{ij}}W + W\dfrac{\partial W}{\partial (R)_{ij}} =\dfrac{\sqrt{t}}{\sqrt{N}}\left( (E_{ij}+E_{ji})W+W(E_{ij}+E_{ji}) \right).
\end{equation}

Using derivatives of $W$ and $W^2$ in \eqref{eq: G first_deriv_inter_2}, we get
\begin{equation}
\label{first_deriv_wzi}
    \dfrac{\partial G^2}{\partial (R)_{ij}}(W-zI)^{2} + \dfrac{\sqrt{t}}{\sqrt{N}}G^2\left(  (E_{ij}+E_{ji})(W-zI)+(W-zI)(E_{ij}+E_{ji}) \right)=0,
\end{equation}
and multiplying both sides by $G^2$ from the right, 
\begin{equation}
\label{eq: G deriv R final}
    \dfrac{\partial G^2}{\partial (R)_{ij}} = - \dfrac{\sqrt{t}}{\sqrt{N}}G^2(E_{ij}+E_{ji})G - \dfrac{\sqrt{t}}{\sqrt{N}}G(E_{ij}+E_{ji})G^2.
\end{equation}

In order to find $\Ee\{\dfrac{\partial (G^2)_{ji}}{\partial (H)_{ij}}\}$, we repeat the same steps, and the final formula is almost the same as \eqref{eq: G deriv R final}, with the only difference being the scalings as $\dfrac{\partial W}{\partial (H)_{ij}} = \dfrac{\sqrt{1-t}}{\sqrt{N}} E_{ij}$, thus 
\begin{equation}
\label{eq: G deriv H final}
    \dfrac{\partial G^2}{\partial (H)_{ij}} = - \dfrac{\sqrt{1-t}}{\sqrt{N}}G^2(E_{ij}+E_{ji})G - \dfrac{\sqrt{1-t}}{\sqrt{N}}G(E_{ij}+E_{ji})G^2.
\end{equation}
Now, find the second derivative $\Ee\{\dfrac{\partial^2 (G^2)_{ji}}{\partial (R)^2_{ij}}\}$ by differentiating \eqref{first_deriv_inter} one more time.
\begin{equation}
\label{first_deriv}
    \dfrac{\partial G^2}{\partial (R)_{ij}} = - G^2\dfrac{\partial}{\partial (R)_{ij}}((W-zI)^{2}) G^2.
\end{equation}
Differentiating it one more time, using the product rule: 
\begin{multline}
    \dfrac{\partial^2 G^2}{\partial (R)^2_{ij}} = - \dfrac{\partial G^2}{\partial (R)_{ij}}\cdot\dfrac{\partial}{\partial (R)_{ij}}((W-zI)^{2})  G^2 - G^2\dfrac{\partial^2}{\partial (R)^2_{ij}}((W-zI)^{2}) G^2 \\ - G^2\dfrac{\partial}{\partial (R)_{ij}}((W-zI)^{2}) \dfrac{\partial G^2}{\partial (R)_{ij}}.
\end{multline}
Substituting \eqref{first_deriv}, we get 
\begin{equation}
    \dfrac{\partial^2 G^2}{\partial (R)^2_{ij}} =  2G^2\dfrac{\partial}{\partial (R)_{ij}}((W-zI)^{2}) G^2\dfrac{\partial}{\partial (R)_{ij}}((W-zI)^{2}) G^2 - G^2\dfrac{\partial^2}{\partial (R)^2_{ij}}((W-zI)^{2}) G^2. 
\end{equation}
Now, compute the second derivative of $(W-zI)^2$. Since $\dfrac{\partial W}{\partial (R)_{ij}} = \dfrac{\sqrt{t}}{\sqrt{N}} (E_{ij}+E_{ji})$ is constant with respect to $R_{ij}$, $\dfrac{\partial^2 W}{\partial (R)^2_{ij}} = 0$, and 
\begin{equation}
\begin{split}
    \dfrac{\partial^2}{\partial (R)^2_{ij}}((W-zI)^{2}) &= \dfrac{\sqrt{t}}{\sqrt{N}}\dfrac{\partial}{\partial R_{ij}}\left(  (E_{ij}+E_{ji})(W-zI)+(W-zI)(E_{ij}+E_{ji}) \right) \\
    &= 2\dfrac{t}{N}(E_{ij}+E_{ji})^2. 
\end{split}
\end{equation}
Also, from \eqref{first_deriv_wzi}
\begin{equation}
   \dfrac{\partial}{\partial (R)_{ij}}((W-zI)^{2}) = \dfrac{\sqrt{t}}{\sqrt{N}}\left(  (E_{ij}+E_{ji})(W-zI)+(W-zI)(E_{ij}+E_{ji}) \right),
\end{equation}
so
\begin{multline}
    \dfrac{\partial^2 G^2}{\partial (R)^2_{ij}} =  2\dfrac{t}{N}G^2\left(  (E_{ij}+E_{ji})(W-zI)+(W-zI)(E_{ij}+E_{ji}) \right) G^2 \left(  (E_{ij}+E_{ji})(W-zI)\right. \\
    \left. +(W-zI)(E_{ij}+E_{ji}) \right) G^2
    -2\dfrac{t}{N}G^2 (E_{ij}+E_{ji})^2 G^2. 
\end{multline}
Using that $G(W-zI)=1$ and canceling the identical terms: 
\begin{multline}\label{2nd der of G2}
    \dfrac{\partial^2 G^2}{\partial (R)^2_{ij}} =  2\dfrac{t}{N} \left(G^2(E_{ij}+E_{ji})G(E_{ij}+E_{ji})G  \right.\\ 
    \left. + G(E_{ij}+E_{ji})G^2(E_{ij}+E_{ji})G + G(E_{ij}+E_{ji})G(E_{ij}+E_{ji})G^2 \right).   
\end{multline}
Note that considering the $ji$-element of the matrix is the same as multiplying this matrix by $E_{ij}$ on the right and taking the trace. After that, use the cyclic property of the trace to unify the terms. 
\begin{multline}
\label{eq: G second deriv final ji}
    \dfrac{\partial^2 (G^2)_{ji}}{\partial (R)^2_{ij}} =  2\dfrac{t}{N} \Tr \left(G^2(E_{ij}+E_{ji})G(E_{ij}+E_{ji})G E_{ij} \right.\\ 
    \left. + G^2(E_{ij}+E_{ji})GE_{ij}G(E_{ij}+E_{ji}) + G^2E_{ij}G(E_{ij}+E_{ji})G(E_{ij}+E_{ji}) \right). 
\end{multline}

When using \eqref{eq: decomp Wigner} and \eqref{eq: decomp Gauss} for \eqref{eq deriv_g_mu simplified}, note that $\kappa_1^R=0$ (as $R_{ij}$ have mean zero). Also, using \eqref{eq: G deriv R final}, \eqref{eq: G deriv H final}, and the fact that $\kappa_2^R=\Ee\{(H)_{ij}^2 \}$, we see that the $\kappa_2$-term is canceled with the Gaussian term in \eqref{eq deriv_g_mu simplified}, thus 

\begin{equation}
    \dfrac{\partial}{\partial t}g_{\mu}(t,z)= -\dfrac{1}{2\sqrt{t}N^{3/2}} \sum_{i,j=1}^{N} \left(\frac{\kappa_{3}^{R}}{2}\Ee\{\dfrac{\partial^2 (G^2)_{ji}}{\partial (R)_{ij}^2}\}+\varepsilon_{ij} \right).
\end{equation}
Distribute the terms in \eqref{eq: G second deriv final ji} and take a sum over all $i, j$. We get four types of terms: 
\begin{multline}
\sum_{i,j=1}^{N} \Ee\{\dfrac{\partial^2 (G^2)_{ji}}{\partial (R)_{ij}^2}\} = \dfrac{2t}{N}\Ee\left( \sum_{i,j=1}^{N}\Tr  G^2E_{ij}GE_{ij}GE_{ij} + \sum_{i,j=1}^{N}\Tr  G^2E_{ij}GE_{ij}GE_{ji}\right. \\ \left.+ \sum_{i,j=1}^{N}\Tr  G^2E_{ij}GE_{ji}GE_{ij} + \sum_{i,j=1}^{N}\Tr  G^2E_{ji}GE_{ij}GE_{ij}  \right).
\end{multline}
Using $\norm{G} \le (\Im z)^{-1}$ and Cauchy-Schwarz inequality, we obtain
\begin{align}
    |\sum_{i,j=1}^{N}\Tr  G^2E_{ij}GE_{ij}GE_{ij}| &\leq \sum_{i,j=1}^{N}| (G^2)_{ji}G_{ji}G_{ji}|\leq \|G\|^2\sum_{i,j=1}^{N}|G_{ji}|^2\leq CN \\
    \begin{split}
    |\sum_{i,j=1}^{N}\Tr  G^2E_{ij}GE_{ij}GE_{ji}| &\leq \sum_{i,j=1}^{N}| (G^2)_{ii}G_{ji}G_{jj}| \\
    &\leq \|G\|\left( \sum_{i=1}^{N}|G_{ii}^2|^2\right)^{1/2}\left( \sum_{i=1}^{N}|G_{ii}|^2\right)^{1/2} \leq CN
    \end{split} \\
   | \sum_{i,j=1}^{N}\Tr  G^2E_{ij}GE_{ji}GE_{ij}| &\leq \sum_{i,j=1}^{N}| (G^2)_{ij}G_{ii}G_{jj}|\leq \|G^2\|\left( \sum_{i=1}^{N}|G_{ii}|^2\right)\leq CN
   \end{align}
   \begin{align}
   \begin{split}
    |\sum_{i,j=1}^{N}\Tr  G^2E_{ji}GE_{ij}GE_{ij}  | &\leq \sum_{i,j=1}^{N}| (G^2)_{ii}G_{jj}G_{ij}| \\
    &\leq \|G\|\left( \sum_{i=1}^{N}|(G^2)_{ii}|^2\right)^{1/2}\left( \sum_{i=1}^{N}|G_{ii}|^2\right)^{1/2} \leq CN.       
   \end{split}
\end{align}
Thus, 
\begin{equation}
    \sum_{i,j=1}^{N} \Ee\{\dfrac{\partial^2 (G^2)_{ji}}{\partial (R)_{ij}^2}\} = tO\left( 1 \right).
\end{equation}

Finally, in order to control $\varepsilon_{ij}$ we need to estimate the third derivative $\dfrac{\partial^3 (G^2)_{ji}}{\partial (R)_{ij}^3}$. It has a form similar to \eqref{2nd der of G2} with $N^{3/2}$ in the denominator. Estimating $ji$-entry of the matrix by its norm, we get
\begin{equation}
    \abs{\dfrac{\partial^3 (G^2)_{ji}}{\partial (R)_{ij}^3}} \le Ct^{3/2}N^{-3/2},
\end{equation}
that yields the same bound for $\varepsilon_{ij}$.

 \end{proof}

\acknowledgement{The work of I.A.\ was supported by the Grant ``International Multilateral Partnerships for Resilient Education and Science System in Ukraine'' IMPRESS-U: N7114 funded by US National Academy of Science and Office of Naval Research. The work of L.B.\ and M.K.\ was partially supported by the NSF Grant IMPRESS-U: N2401227. The authors are grateful to Prof.\ Mariya Shcherbina and Prof.\ Tatyana Shcherbina for fruitful discussions and useful suggestions.}

\bibliography{Outliers}

\end{document}